\newtheorem{lemma}{Lemma}
\newtheorem{proposition}{Proposition}
\newtheorem{corollary}{Corollary}
\newtheorem{fact}{Fact}
\newtheorem{remark}{Remark}
\newtheorem{assumption}{Assumption}
\def\begcen{\begin{center}}
\def\endcen{\end{center}}
\newcommand{\col}{ \mbox{col} }
\def\L2{{\cal L}_2}
\def\L2e{{\cal L}_{2e}}
\def\rea{\mathbb{R}}
\def\x{{x}}
\def\begequarr{\begin{eqnarray}}
\def\endequarr{\end{eqnarray}}
\def\begequarrs{\begin{eqnarray*}}
\def\endequarrs{\end{eqnarray*}}
\def\begarr{\begin{array}}
\def\endarr{\end{array}}
\def\begequ{\begin{equation}}
\def\endequ{\end{equation}}
\def\lab{\label}
\def\begdes{\begin{description}}
\def\enddes{\end{description}}
\def\begenu{\begin{enumerate}}
\def\begite{\begin{itemize}}
\def\endite{\end{itemize}}
\def\endenu{\end{enumerate}}
\def\lef[{\left[\begin{array}}
\def\rig]{\end{array}\right]}
\def\begcen{\begin{center}}
\def\endcen{\end{center}}
\def\begrem{\begin{remark}\rm}
\def\endrem{\end{remark}}
\def\begassum{\begin{assumption}}
\def\endassum{\end{assumption}}
\def\begassums{\begin{assumption*}}
\def\endassums{\end{assumption*}}
\def\begassu{\begin{ass}}
\def\endassu{\end{ass}}
\def\beglem{\begin{lemma}}
\def\endlem{\end{lemma}}
\def\begcor{\begin{corollary}}
\def\endcor{\end{corollary}}
\def\begfac{\begin{fact}}
\def\endfac{\end{fact}}
\def\TAC{{\it IEEE Trans. Automat. Contr.}}
\def\AUT{{\it Automatica}}
\def\SCL{{\it Systems and Control Letters}}
\def\L2e{{\cal L}_{2e}}
\def\L2{{\cal L}_{2}}
\def\rea{\mathbb{R}}
\def\intnum{\mathbb{Z}}
\def\col{\mbox{col}}
\def\IJRNLC{{\it Int. J. on Robust and Nonlinear Control}}
\def\TAC{{\it IEEE Trans. Automatic Control}}
\def\IJC{{\it International Journal of Control}}
\def\SCL{{\it Systems \& Control Letters}}
\def\AUT{{\it Automatica}}
\def\intnum{\mathbb{Z}}
\def\begsubequ{\begin{subequations}}
    \def\endsubequ{\end{subequations}}
\begin{document}

\title{Robust Integral Consensus Control of Multi-Agent Networks Perturbed by Matched and Unmatched Disturbances: The Case of Directed Graphs}

\author{Jose Guadalupe Romero,~\IEEEmembership{Member,~IEEE,} and David Navarro-Alarcon,~\IEEEmembership{Senior~Member,~IEEE}%
\thanks{This work is supported in part by the Research Grants Council (RGC) of Hong Kong under grant 15212721 and grant 15231023.}%
\thanks{Jose Guadalupe Romero is with the Department of Electrical and Electronic Engineering, Instituto Tecnologico Autonomo de Mexico (ITAM), Mexico City, Mexico. (email: jose.romerovelazquez@itam.mx).}
\thanks{David Navarro-Alarcon is with Department of Mechanical Engineering, The Hong Kong Polytechnic University (PolyU), Kowloon, Hong Kong. (e-meail: dnavar@polyu.edu.hk)}
}

\bstctlcite{IEEEexample:BSTcontrol}

%\markboth{IEEE TRANSACTIONS ON CYBERNETICS}%{ROMERO AND NAVARRO-ALARCON: Robust Consensus Control of Multi-Agent Directed Networks}
\maketitle

\begin{abstract}
This work presents a new method to design consensus controllers for perturbed double integrator systems whose interconnection is described by a directed graph containing a rooted spanning tree. 
We propose new robust controllers to solve the consensus and synchronization problems when the systems are under the effects of matched and unmatched disturbances. 
In both problems, we present simple continuous controllers, whose integral actions allow us to handle the disturbances. 
A rigorous stability analysis based on Lyapunov's direct method for unperturbed networked systems is presented. 
To assess the performance of our result, a representative simulation study is presented.
\end{abstract}

\begin{IEEEkeywords}
				Multi-agent systems, consensus control, directed networks, matched and unmatched disturbances.
\end{IEEEkeywords}
\IEEEpeerreviewmaketitle

%%%%%%%%%%%%%%%%%%5
\section{Introduction}
\lab{sec1}
%%%%%%%%%%%%%%%%%%
Due to the recent technological advances and affordability of consumer-level autonomous systems, the control community has paid considerable attention to various control problems in multi-agent systems. 
Some classical examples include the design of formation \cite{CHEWAN}, flocking \cite{OLF}, consensus control strategies \cite{RENBEA}. 
The consensus control problem has been of particular interest to researchers, since the computed {\it final agreement} of a network of agents represents a crucial task in many real-world applications, e.g., in robot fleets, electrical power networks, biological systems \cite{RENBEARD, MENDIMJOH, FERetal}, etc.

Graph theory is the main tool used in the analysis of consensus control problems, where the network's Laplacian matrix describes the interconnection and communication properties among the agents. 
An undirected graph (which models bidirectional communication) has associated a symmetric and positive semi-definite Laplacian matrix; This valuable property enables to use various well-known results to calculate the final agreement for general (non)linear systems, Euler-Lagrange dynamics, nonholonomic robots, among others \cite{Lietal11, REN09, LISTMASADA}. 
In these works, the presence of uncertainties, unknown parameters, input delays and disturbances have been typically tackled with adaptive or robust control techniques, see e.g., \cite{PHIetal, NUNOetal, SUetal, ROMNUNALD}. 
With undirected networks, it is relatively easy to conduct Lyapunov-based stability analysis, even in the presence of nonlinear time-varying uncertainties \cite{BECCROV, KATROV}.

When considering directed graphs, the consensus problem is significantly more complicated since the Laplacian matrix is no longer symmetric. 
This slight difference, yet with great implications, means that we can no longer use the same controller design techniques applied to the above-mentioned undirected networks.
It is worth noting that in real applications, a directed graph is the most natural and realistic way to model information exchange among a group of agents, since their communication is not necessarily bidirectional. 
This common situation arises due to the limited sensing capability of transducers, as well as their weak communication ranges and intermittent connectivity. 
Multi-agent networks where local unidirectional information exchange is allowed are more convenient in terms of cost, scalability and flexibility, however, their analysis and controller design presents many challenges.

When a directed graph is strongly connected and structurally unbalanced, the left eigenvector of the Laplacian matrix consists of positive elements; A Lyapunov function for the consensus problem can then be constructed using this eigenvector, see e.g., \cite{Zhanetal,  DASLEW, GHACOR, DOMHAD, CAOMORAND}. 
Robust consensus controllers have been proposed for directed networks considering various problems such as unmeasurable velocity, matched disturbances, delays, and other uncertainties \cite{SHIJOH, LIetal19, ZHANGetal}. 
However, the strong connectedness requirement used in these works is very restrictive, as it implies that every agent is reachable from every other agent in the network, which hard to satisfy in practice. 
To relax this condition, in \cite{RENBEA05} was proved that consensus can be established if the graph describing the interconnection has a rooted spanning tree, which is a considerably less restrictive situation.
Examples that use this approach include consensus for linear systems \cite{Lietal15}, second-order heterogeneous systems \cite{MEIRENCHEN},  uncertain multi-agent linear systems \cite{LIUWUZHO}, among others. 

A key problem in multi-agent consensus is to design effective control strategies that can deal with unknown matched and unmatched disturbances to the input. 
For the fist case (i.e., when disturbances appears in the control input channel), several works have been published considering linear multi-agent systems \cite{LVetal17}, finite-time consensus for second-order systems \cite{ZHAOetal17}, unknown velocities \cite{TIALIULIU}, consensus with disturbances generated by a known linear integrator \cite{SUNetal}, and for disturbances generated by exosystems \cite{Yuetal23}. 
Although in some works the local disturbance rejection has been proved as well as the consensus goal, {\it all them} are based on complex designs, namely, they rely on {\it discontinuous} or {\it high-gain} adaptive observers combined with discontinuous (terminal) sliding mode controllers. 
It is well known that this type controllers exhibit robustness against matched uncertainties, however, the main disadvantage is that they lead to control signals that may produced undesired chattering effects on the actuators \cite{EDWSPU}. 
 
As unmatched disturbances do not appear in the control input channel, their active compensation presents many challenges.
This type of disturbances are common in many systems, e.g., in mechanical systems when the {\it velocity} measurements are corrupted \cite{ROMDONORT}, in missile guidance systems due to torques arising from external wind or due to variation of aerodynamic coefficients \cite{CHEN}; These disturbances are also common in power electronics like the DC-DC and DC-AC buck and {\^C}uk converters \cite{SUNetal, BATFOSOLI}. 
When unmatched disturbances are present in a network with a directed graph, only {\it partial} consensus or synchronization can be established \cite{SUNetal, WANGetal21}. 
For double-integrator systems, the portion of the state variables that reach consensus correspond to the unactuated variables, typically referred as the {\it output} state.

Various works have addressed this unmatched disturbance case, e.g.,  a controller to ensure output consensus under a strongly connected graph was presented in \cite{WANGetal21}. 
In \cite{GUetal21}, robust output consensus {\it tracking} was guaranteed in finite time and considering a directed graph with spanning tree. 
A time varying adaptive {\it output} formation control scheme for collision avoidance via artificial potential for second-order systems with both matched and unmatched disturbances was proposed in \cite{ZHENGetal23}. 
A neural network based adaptive containment controller was presented in \cite{XIAOetal22}; The work proved that the proposed containment algorithm ensures that the closed loop systems are finite-time stable and containment errors converge to a small residual set around the origin. 
However, all these previous works are based on dynamic gains and {\it discontinuous} adaptive observers/controllers, which may yield undesired effects in the control signal.
Recently, in \cite{PANLORSUK} was presented a strict Lyapunov function for {\it dynamic} consensus of systems of networks with a directed spanning tree \cite{LIDUACHE, OLFMUR}; This work (which is based on \cite{PANLOR17}) provides a constructive proof for global exponential stability, which is ensured under simple conditions of the control and Lyapunov gains.
Some new results have also explored this dynamic consensus idea, e.g., for linear systems \cite{DUTetal22}, and for model reference adaptive control \cite{DUTetal23, DUTetal23a}.

In this paper, we address the robust controller design for multi-agent systems perturbed by constant matched and unmatched disturbances, and whose interconnection is described by a directed graph. 
In contrast with existing solutions, our proposed method uses a {\it simple} and {\it smooth} integral action to deal with disturbances.
Since complex solutions have been presented to ensure the consensus of the called {\it output state} when unmatched disturbances are presented, we relax the solution to the synchronization of periodic (i.e., closed) orbits \cite{SCASEP, ORTetal20}, which is a more frequently encountered problem  in many applications, e.g., in power systems \cite{BATFOSOLI, SAIISHISH, ANGOLITAB}. 

The original contributions of this work are listed as follows:
\begin{itemize}
    \item We propose a new control scheme to ensure dynamic consensus for perturbed multi-agent systems with directed communication.
    \item We propose a new integral action to reject constant matched disturbances, and for the case of unmatched disturbances, to ensure synchronization of periodic orbits.
    \item We propose a new strict Lyapunov function to rigorously analyze the stability of our smooth integral controller.
    \item We report a detailed numerical study to validate the performance of our proposed method.
\end{itemize}
  
The rest of the paper is organized as follows: Section II presents   preliminaries and assumptions  to be used; Section III contains our main result; and, finally, the simulations and conclusions are shown in Sections IV and V, respectively. 
\paragraph{Notation.} $\rea_{>0}$, $\rea_{\geq 0}$, $\intnum_{>0}$ and $\intnum_{\geq 0}$ denote the positive and non-negative real and integer numbers, respectively. $\| x\|$ stands for the standard Euclidean norm of vector $x\in\rea^n$. $I_n$ represents the identity matrix of size $n\times n$. ${\bf1}_k$ stands for a column vector of size $k$ with all entries equal to one.  The set ${\bar N}$ is defined as ${\bar N}:=\{1,\dots,N\}$, where $N$ is a positive natural number.

%%%%%%%%%%%%%
\section{Problem Formulation}
The interconnection graph between the agents may be modelled by a constant Laplacian matrix, ${ L}:=[\ell_{ij}] \in{\mathbb{R}}^{N\times N}$, whose $i$th-$j$th element satisfies: 
\begin{equation}
\label{lap:eq}
{\ell _{ij}} = \left\{ {\begin{array}{*{20}{c}}
{\sum\limits_{k \in  \mathcal N_i} {{a_{ik}}} }&{i = j}\\
{ - {a_{ij}}}&{i \ne j},
\end{array}} \right.
\end{equation}
where $ \mathcal N_i\subset \mathbb{Z}$ is the set of indices corresponding to agents that locally transmit information to the $i$th agent, $a_{ij}>0$ denotes the connectivity between agents in the network (no self connections are considered, thus, $a_{ij}=0$). 
For directed graphs, the Lapacian matrix is typically not symmetric, i.e., $a_{ij} \not= a_{ji}$ is generally satisfied.
In this work, we make the following key assumption:
\begin{assumption}\em
\label{asp1}
The directed graph that models the interactions among agents in the network contains a directed spanning tree.
\end{assumption}

Based on this assumption, the following Lemmata hold:

\begin{lemma} \em
\label{lem1}
\cite{RENBEARD} The Laplacian matrix $L$ has a unique zero-eigenvalue and, by construction, the rest of its spectrum is strictly positive and satisfies $L \boldsymbol 1_N = 0$, with $\boldsymbol 1_N \in \rea^N$ as its associated right eigenvector. 
\end{lemma}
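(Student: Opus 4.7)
The strategy is to verify the three claims separately: that $\mathbf{1}_N$ is a right eigenvector for the eigenvalue $0$, that every other eigenvalue of $L$ has strictly positive real part, and that $0$ is simple. The first two are essentially structural facts about Laplacians and are independent of the spanning-tree hypothesis; the third is precisely where Assumption~\ref{asp1} is used.

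First, I would verify $L\mathbf{1}_N = 0$ by direct inspection of \eqref{lap:eq}: the $i$th row sum equals
\begin{equation*}
\sum_{j=1}^{N}\ell_{ij} \;=\; \sum_{k\in\mathcal N_i} a_{ik} \;-\; \sum_{j\neq i} a_{ij} \;=\; 0,
\end{equation*}
because the off-diagonal contributions $-a_{ij}$ for $j\in\mathcal N_i$ exactly cancel the diagonal accumulation, while $a_{ij}=0$ for $j\notin\mathcal N_i$. Hence $0$ is an eigenvalue and $\mathbf{1}_N$ is an associated right eigenvector.

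Second, I would locate the rest of the spectrum in the closed right half-plane via Gershgorin's disk theorem. Each disk is centered at $\ell_{ii}=\sum_{k\in\mathcal N_i}a_{ik}\ge 0$ with radius $\sum_{j\neq i}|\ell_{ij}| = \sum_{j\neq i} a_{ij} = \ell_{ii}$, so every disk lies in $\{z\in\mathbb C : \operatorname{Re}(z)\ge 0\}$ and touches the imaginary axis only at the origin. Consequently $\operatorname{Re}(\lambda_k(L))\ge 0$ for every eigenvalue, with any purely imaginary eigenvalue necessarily equal to zero.

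The main obstacle is the third step: showing that $0$ is a \emph{simple} eigenvalue and, together with Gershgorin's bound, that all remaining eigenvalues have strictly positive real part. This is where the directed-spanning-tree hypothesis is essential, since without it one can easily construct disconnected digraphs whose Laplacians have a zero eigenvalue of higher multiplicity. My plan is to relabel the agents so that the root of the spanning tree is vertex $1$, and then exhibit a permutation under which $L$ can be analyzed through the reduced matrix obtained by deleting the first row and column. Reachability of every node from the root, together with the strictly diagonally dominant structure along each directed path out of the root, implies that this reduced matrix is nonsingular (irreducibly diagonally dominant after a suitable block decomposition along the spanning tree), so $\operatorname{rank}(L)=N-1$. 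This forces the geometric multiplicity of $0$ to be $1$, and combined with the algebraic argument (the rank-nullity plus the fact that no Jordan block at $0$ of size $>1$ can coexist with $\operatorname{rank}(L)=N-1$ for this Gershgorin-type structure), we conclude that $0$ is simple, hence every other eigenvalue satisfies $\operatorname{Re}(\lambda)>0$. Assembling the three steps yields the statement; as the conclusion is classical we would, as the authors do, ultimately refer to \cite{RENBEARD} for the detailed technical execution.
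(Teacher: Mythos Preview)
The paper itself offers no proof of this lemma; it simply cites \cite{RENBEARD} and moves on. So there is nothing to compare against---your sketch already goes further than the authors do, and your concluding remark that one ``ultimately refer[s] to \cite{RENBEARD}'' is exactly their approach.

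That said, your third step contains a genuine gap worth flagging. You argue that the reduced $(N-1)\times(N-1)$ principal submatrix is nonsingular, giving $\operatorname{rank}(L)=N-1$ and hence geometric multiplicity~$1$ for the zero eigenvalue. But you then assert that ``no Jordan block at $0$ of size $>1$ can coexist with $\operatorname{rank}(L)=N-1$ for this Gershgorin-type structure,'' and this is not correct as stated: a $2\times 2$ nilpotent Jordan block has rank~$1$, so it is perfectly compatible with $\operatorname{rank}(L)=N-1$. Rank alone does not control algebraic multiplicity, and nothing in the Gershgorin argument rules out a nontrivial Jordan chain at~$0$.

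To close the gap you need an additional ingredient. Two standard routes: (i) observe that for $\epsilon>0$ small enough $M:=I_N-\epsilon L$ is row-stochastic with nonnegative entries, and since powers $M^k$ remain stochastic (hence bounded), the eigenvalue~$1$ of $M$---equivalently the eigenvalue~$0$ of $L$---must be semisimple; or (ii) invoke the directed matrix-tree theorem to show that the coefficient of $\lambda$ in $\det(\lambda I_N - L)$ equals a sum of spanning-arborescence weights, which is strictly positive under Assumption~\ref{asp1}, forcing algebraic multiplicity exactly~$1$. Either of these, combined with your Gershgorin localization, then yields $\operatorname{Re}(\lambda)>0$ for all nonzero eigenvalues.
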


\begin{lemma} \em
\label{lem2}
\cite{PANLORSUK} Let ${\mathcal G}$ be a directed graph and $L$ its corresponding non-symmetric Laplacian matrix. 
Then, for any positive matrix $Q>0 \in \rea^{N \times N}$ and scalar $\alpha>0$, there exists a positive symmetric matrix $P>0 \in \rea^{N \times N}$ such that:
\begin{align}
PL+L^\top P= Q-\alpha [P {\boldsymbol 1}_N  v_\ell^\top +v_\ell  {\boldsymbol 1}_N^\top P],
\end{align}
  where the column vector $v_\ell$ denotes the left eigenvector associated with the single zero eigenvalue of $L$.
\end{lemma}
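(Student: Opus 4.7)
The plan is to reduce the identity to a standard Lyapunov equation by a rank-one additive correction to $L$ that lifts the single zero eigenvalue into the open right half plane while leaving the other eigenvalues untouched. Define
\[
\tilde L := L + \alpha {\boldsymbol 1}_N v_\ell^\top,
\]
so that the claimed identity is exactly $P\tilde L + \tilde L^\top P = Q$. Thus it suffices to prove that $-\tilde L$ is Hurwitz, after which the classical Lyapunov theorem supplies a unique $P>0$ for every $Q>0$.

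First I would pin down the spectrum of $\tilde L$. By Lemma~1 the zero eigenvalue of $L$ is simple, so $v_\ell$ is unique up to scaling; normalize it so $v_\ell^\top {\boldsymbol 1}_N = 1$ (any scaling factor can be absorbed into the free parameter $\alpha>0$). Introduce a change of basis $T = [{\boldsymbol 1}_N \mid V]$ with inverse $T^{-1} = \begin{bmatrix} v_\ell^\top \\ W^\top \end{bmatrix}$, where $V,W\in\rea^{N\times(N-1)}$ are chosen to complete the biorthogonality relations $v_\ell^\top V=0$, $W^\top {\boldsymbol 1}_N=0$, $W^\top V=I_{N-1}$. Using $L{\boldsymbol 1}_N=0$ and $v_\ell^\top L=0$ from Lemma~1, a short computation yields
\[
T^{-1} L T = \begin{bmatrix} 0 & 0 \\ 0 & L_r \end{bmatrix}, \qquad T^{-1} {\boldsymbol 1}_N v_\ell^\top T = \begin{bmatrix} 1 & 0 \\ 0 & 0 \end{bmatrix},
\]
so that
\[
T^{-1} \tilde L T = \begin{bmatrix} \alpha & 0 \\ 0 & L_r \end{bmatrix},
\]
where $L_r := W^\top L V$ carries the $N-1$ nonzero eigenvalues of $L$. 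By Assumption~1 and Lemma~1, these eigenvalues lie in the open right half plane, and together with $\alpha>0$ this proves that every eigenvalue of $\tilde L$ has strictly positive real part.

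Second, I would invoke the classical Lyapunov theorem for the Hurwitz matrix $-\tilde L$: for every $Q>0$ there exists a unique $P>0$ solving $P\tilde L + \tilde L^\top P = Q$. Expanding $\tilde L$ on the left-hand side recovers the stated identity, closing the proof.

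The main obstacle is justifying the spectral split, because $L$ need not be diagonalizable (a general directed Laplacian possessing only a spanning tree may have nontrivial Jordan blocks at its nonzero eigenvalues), so a naive argument based on orthogonality between right and left eigenvectors is unavailable. The block-triangularization above sidesteps this difficulty because it relies only on the \emph{simplicity} of the zero eigenvalue (Lemma~1), which is enough to extract a one-dimensional invariant subspace spanned by ${\boldsymbol 1}_N$ together with a complementary invariant subspace on which $L$ acts as $L_r$ and is untouched by the rank-one perturbation $\alpha{\boldsymbol 1}_N v_\ell^\top$.
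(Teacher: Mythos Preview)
Your argument is correct. The key reduction $\tilde L:=L+\alpha{\boldsymbol 1}_N v_\ell^\top$ is exactly the right move: it turns the claimed identity into the standard Lyapunov equation $P\tilde L+\tilde L^\top P=Q$, and your block-triangularization (which needs only the simplicity of the zero eigenvalue, not diagonalizability of $L$) cleanly shows that $\tilde L$ has spectrum $\{\alpha\}\cup\sigma(L)\setminus\{0\}$ in the open right half plane, so $-\tilde L$ is Hurwitz and the Lyapunov theorem finishes the job. One small remark: your line ``any scaling factor can be absorbed into $\alpha$'' is fine because the lemma is stated for \emph{all} $\alpha>0$, and for a spanning-tree Laplacian $v_\ell$ has nonnegative entries with $v_\ell^\top{\boldsymbol 1}_N>0$, so the normalization $v_\ell^\top{\boldsymbol 1}_N=1$ is always possible.

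As for comparison: the present paper does not prove Lemma~\ref{lem2} at all; it simply quotes the result from \cite{PANLORSUK} and uses it as a tool in the proofs of Propositions~1 and~2. Your proof is precisely the construction given in \cite{PANLORSUK} (and in the follow-up \cite{DUTetal22}), where the perturbed matrix $\tilde L=L+\alpha{\boldsymbol 1}_N v_\ell^\top$ is introduced and shown to be Hurwitz via the same similarity transform, after which the Lyapunov equation yields $P$. So you have independently reproduced the cited argument rather than proposed a different route.
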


%The address the consensus problem of multiagent systems, we state the following assumptions.
%\begin{assumption}\em
%For any matrix ${\mathcal H} \in \rea^{n \times n}$, there exists a matrix ${\mathcal S} \in \rea^{n\times n}$, so that 
%\begin{equation}
% (I_n-{\boldsymbol 1} v_\ell){\mathcal H}= {\mathcal S} (I_n-{\boldsymbol 1} v_\ell).
%\end{equation}
%\end{assumption}

\paragraph{Problem statement.} Consider a group of double-integrator linear systems subject to matched and unmatched constant disturbances, and whose interconnection satisfies Assumption \ref{asp1}.
For this class of dynamic systems, we aim to design an integral controller that  (i) can reject all matched disturbances and thus ensure the consensus of all agents and (ii) ensures synchronization of the output state of all agents to a periodic orbit when constant unmatched disturbances are present.

%%%%%%%%%%%%%%%5
\section{Main Result}
\lab{sec2}

\subsection*{Matched Disturbances}
  
  The multi-agent system to be addressed is of the form:
  \begin{align}
  \label{sys2}
  \dot  x_i =& \; y_i \quad x_i, y_i\in \rea,\;  i \leq N \nonumber \\
  \dot y_i=& \; u_i +d_i 
  \end{align}
  with $u_i \in \rea$ as the input control and $d_i \in \rea$ the external constant disturbance.
  The interconnection among the agents is assumed to satisfy Assumption \ref{asp1}.
  
  Our goal is to design a robust controller such that dynamic consensus can be established, i.e.: 
 \begin{align}
 \label{goal1}
\lim_{t\to \infty} &\quad |x_{i}(t)-x_{m}(t) |=0, \;\; |y_{i}(t)-y_{m}(t) |  =0 \nonumber \\
\lim_{t\to \infty} &\quad |\tilde \delta_{i}(t)-\delta_m(t) |=0 , \quad i \not=m.
\end{align}
for a disturbance error $\tilde \delta_i= \hat \delta -\frac{d_i}{\gamma_3}$, with $\hat \delta (t)$ as an integral action whose aim is to compensate the disturbance, and $\gamma_3$ a free positive gain.  
This consensus implies the so-called {\it mean-field} dynamics \cite{PANLOR17}:
 \begin{align}
 \label{mf}
 x_{i}= v_\ell^\top x, \quad y_{i}= v_\ell^\top y, \quad \delta_i=v_\ell^\top \tilde \delta
 \end{align}
  which can be seen as a weighted average of the system.   
 Here, we have used the compact form $x= [x_1, \cdots, x_N]^\top$,  $y= [y_1, \cdots, y_N]^\top$, $ \tilde \delta = [\delta_1, \cdots, \tilde \delta_N]^\top$.

\begin{proposition}   \em
\label{pro1}
  Consider the system \eqref{sys2} in closed-loop with the following controller: 
  \begin{align}
u_i= &\; -\gamma_1 \sum\limits_{j=1}^{\mathcal N} a_{ij} (x_i-x_j)-\gamma_2 y_i  -\gamma_3 \hat \delta_i \nonumber  \\
\dot {\hat \delta}_i= &\; \gamma_1 \sum\limits_{j=1}^{\mathcal N} a_{ij} (x_i-x_j)+\gamma_4 y_i, 
\label{intc2}
\end{align}
where  $\gamma_1$, $\gamma_2$, $\gamma_3$ and $\gamma_4$ are positive gains to be defined, and $\hat \delta_i$ is the integral action whose aim is to eliminate the disturbance estimation error $\tilde \delta_i = \hat \delta_i-\frac{d_i}{\gamma_3}$. 
From the definition of the signals $x$, $y$, $\tilde \delta$
%By defining $x= [x_1, \cdots, x_N]^\top$,  $y= [y_1, \cdots, y_N]^\top$, $ \tilde \delta = [\delta_1, \cdots, \tilde \delta_N]^\top$, 
and noting that $\frac{\rm d}{{\rm d}t}{\hat\delta}_i = \frac{\rm d}{{\rm d}t} {\tilde\delta}_i$, we can express the closed-loop dynamics in the following compact form:
\begin{align}
\label{tilsys2}
\dot x=& \; y \nonumber \\
\dot y =& \; -\gamma_1 L x-\gamma_2 y - \gamma_3 \tilde \delta \nonumber \\
\dot {\tilde \delta}= & \; \gamma_1 L x +\gamma_4 y
\end{align}
with the following consensus errors\footnote{It is also referred to as synchronization errors in \cite{PANLOR17}.}
\begin{align}
\label{err2}
e_x=&\;(I_n-{\bf 1} v_\ell^\top) x, \quad e_y=\;(I_n-{\bf 1} v_\ell^\top) y \nonumber \\
e_d=&\;(I_n-{\bf 1} v_\ell^\top) \tilde \delta .
\end{align}  
which define the difference between the individual system states and the mean-field state.
This closed-loop system ensures the convergence to zero of $(e_x, e_y, e_d)$ $\to 0$, with a Lyapunov function defined as ${\mathcal H}= {\mathcal H}_s + {\mathcal H}_d$, where: 
 \begin{align}
 {\mathcal H}_s =& \; \frac{1}{2}  \left[ \begin{array}{cc} e_x^\top  & e_y^\top  \end{array} \right]\left[ \begin{array}{cc} \rho P &  \epsilon P \\ \epsilon P & 2 \mu I_n \end{array} \right]   \left[ \begin{array}{c} e_x \\ e_y  \end{array} \right], \nonumber \\
 {\mathcal H}_d=&\;  \frac{1}{2}b  \left[ \begin{array}{cc} e_y^\top  & e_d^\top  \end{array} \right]\left[ \begin{array}{cc} 2I_n &  I_n \\ I_n &  I_n \end{array} \right]   \left[ \begin{array}{c} e_y \\ e_d  \end{array} \right], 
 \end{align}
for positive scalar parameters $\rho, \epsilon, \mu, b > 0$ and a positive-definite symmetric matrix $P=P^\top > 0$, defined such that the condition $\sqrt{\frac{2 \rho \mu}{\|P\|}}> \epsilon$ is satisfied, and hence, ${\mathcal H}>0$. \end{proposition}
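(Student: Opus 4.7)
My approach is to project the compact closed-loop dynamics \eqref{tilsys2} onto the consensus-error subspace and then show that $\calh$, evaluated along these errors, is a strict Lyapunov function. Applying the projector $\Pi:=I_N-{\bf 1}_N v_\ell^\top$ to \eqref{tilsys2}, and using the normalization $v_\ell^\top{\bf 1}_N=1$ together with $L{\bf 1}_N=0$ and $v_\ell^\top L=0$ from Lemma \ref{lem1}, I get $\Pi L = L = L\Pi$, hence $Lx=Le_x$, which yields the error dynamics
\begin{align*}
\dot e_x = e_y, \quad \dot e_y = -\gamma_1 L e_x-\gamma_2 e_y-\gamma_3 e_d, \quad \dot e_d = \gamma_1 L e_x+\gamma_4 e_y.
\end{align*}
Positivity of $\calh$ follows by Schur complement: the kernel of $\calh_s$ is positive definite iff $2\mu I_n-(\epsilon^2/\rho)P\succ 0$, which is exactly the stated condition $\sqrt{2\rho\mu/\|P\|}>\epsilon$, and the $2\times 2$ kernel of $\calh_d$ is positive by inspection of its leading minors.

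Differentiating $\calh$ along these dynamics, the only a priori indefinite quantities are the quadratic form $e_x^\top PL e_x$ and the mixed terms $e_x^\top Pe_y$, $e_x^\top Pe_d$, $e_y^\top L e_x$, and $e_y^\top e_d$. The crucial step is to invoke Lemma \ref{lem2} on $e_x^\top PL e_x=\tfrac{1}{2}e_x^\top(PL+L^\top P)e_x$: since $e_x\in\mathrm{range}\,\Pi$ forces $v_\ell^\top e_x=0$, the rank-one correction $-\alpha[P{\bf 1}_N v_\ell^\top+v_\ell{\bf 1}_N^\top P]$ annihilates $e_x$ from both sides, leaving the clean bound $e_x^\top PL e_x\ge \tfrac{1}{2}\lambda_{\min}(Q)\|e_x\|^2$. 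Selecting $\rho:=\gamma_2\epsilon$ then cancels the symmetric cross term $(\rho-\gamma_2\epsilon)e_x^\top P e_y$ identically.

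The remaining indefinite cross terms $e_x^\top P e_d$, $e_y^\top L e_x$ and $e_y^\top e_d$ are bounded by Young's inequality, each depositing a controlled share into the coefficients of $\|e_x\|^2$, $\|e_y\|^2$, $\|e_d\|^2$ in $\dot\calh$. A hierarchical choice of the free parameters then closes the estimate: fix $\gamma_1,\gamma_2,\gamma_3$; pick $\mu$ large enough that the collective coefficient of $\|e_y\|^2$ (namely $-2\gamma_2(\mu+b)+b\gamma_4+\epsilon\|P\|$ plus Young residuals) is strictly negative; choose $b$ so that $b\gamma_3\|e_d\|^2$ absorbs the Young penalties from $e_x^\top P e_d$ and $e_y^\top e_d$; and finally take $\epsilon$ small enough to preserve both the Schur condition $\sqrt{2\rho\mu/\|P\|}>\epsilon$ and the negativity of the $\|e_x\|^2$ coefficient (where the Young bound involves the factor $\gamma_1(2\mu+b)\|L\|$). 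This delivers $\dot\calh\le -\kappa(\|e_x\|^2+\|e_y\|^2+\|e_d\|^2)$ for some $\kappa>0$, and Lyapunov's direct method yields the claimed convergence $(e_x,e_y,e_d)\to 0$.

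The principal obstacle is the non-symmetric mixed term $e_y^\top L e_x$: the Young bound needed to dominate it redistributes mass proportional to $\gamma_1(2\mu+b)\|L\|$ between $\|e_x\|^2$ and $\|e_y\|^2$, competing directly with the negative margins $\gamma_1\epsilon\lambda_{\min}(Q)/2$ (from Lemma \ref{lem2}) and $2\gamma_2(\mu+b)$ (from the quadratic damping). The hierarchical ordering of $(\rho,\mu,b,\epsilon)$ sketched above is precisely what prevents a circular dependence among these choices and ensures that the three diagonal coefficients end up simultaneously strictly negative.
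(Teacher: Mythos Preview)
Your overall architecture matches the paper's: project onto the disagreement subspace via $\Pi=I_N-{\bf 1}_N v_\ell^\top$, obtain the autonomous error dynamics, invoke Lemma~\ref{lem2} (using $v_\ell^\top e_x=0$ to kill the rank-one correction), and show $\dot\calh$ is a negative-definite quadratic form. The paper likewise sets $\rho=\gamma_2\epsilon$ to annihilate the $(\rho-\gamma_2\epsilon)e_x^\top Pe_y$ term. Where you diverge is in how the remaining sign-indefinite terms are handled: the paper collects $\dot\calh=-\tfrac12 e^\top\caln e$ and proves $\caln\succ 0$ via Schur complements, after two structural simplifications you do not make---it fixes the {\em control} gain $\gamma_4=2\gamma_3(1+\mu/b)+\gamma_2$ so that the $(e_y,e_d)$ block $\caln_{23}$ vanishes, and then takes $\rho=\gamma_2$ (hence $\epsilon=1$) and chooses $\gamma_2$ {\em large} to dominate the Schur-complement term $\gamma_1(2\mu+b)^2 LL^\top$.

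Your hierarchical scheme has a genuine gap at the last step. The only negative contribution to the $\|e_x\|^2$ coefficient is $-\tfrac{1}{2}\gamma_1\epsilon\,\lambda_{\min}(Q)$, which is {\em proportional to $\epsilon$}. The cross term $\gamma_1(2\mu+b)\,e_y^\top Le_x$ has a coefficient independent of $\epsilon$, so any Young split that keeps the $\|e_x\|^2$ penalty below $\tfrac{1}{2}\gamma_1\epsilon\,\lambda_{\min}(Q)$ necessarily dumps a penalty of order $\gamma_1(2\mu+b)^2\|L\|^2/\epsilon$ onto $\|e_y\|^2$. With $\gamma_2$ already fixed and $\mu$ already chosen, this $O(1/\epsilon)$ term cannot be absorbed by the $O(\mu)$ damping $2\gamma_2(\mu+b)$ once you ``take $\epsilon$ small''; in fact the constraint becomes $\epsilon\gtrsim \gamma_1\mu\|L\|^2/\gamma_2$, which is incompatible with $\epsilon\to 0$ for $\mu$ large. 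So the ordering $\mu\to b\to\epsilon$ does {\em not} break the circularity you flag in your last paragraph. The paper escapes this precisely by keeping $\epsilon$ of order one and instead spending the {\em control} gain $\gamma_2$ (and tying $\gamma_4$ to the Lyapunov parameters) to buy the required margin; if you want to stay with Young's inequalities, you must allow $\gamma_2$ to be chosen last and large, not fixed at the outset.
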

 
 \begin{proof}
 By computing the time derivatives of $e_x$ and $e_y$, we obtain the following dynamic equations:
\begin{align}
 \dot e_x=& \; (I_n-{\bf 1}_N v_\ell^\top)y= e_y \nonumber \\
 \dot e_y=&\; - (I_n-{\bf 1}_N v_\ell^\top) (\gamma_1 L x +\gamma_2 y+ \gamma_3 \tilde \delta)\nonumber \\
 =& \;-\gamma_1 (I_n-{\bf 1}_N v_\ell^\top)L(e_x+{\bf 1}_Nv_\ell^\top x) -\gamma_2 e_y-\gamma_3 e_d \nonumber \\
 =&\; -\gamma_1 L e_x -\gamma_2e_y -\gamma_3 e_d, 
 \label{error_traj}
\end{align}
 where to get the last equality we used the facts   $L {\bf 1}_N=0$ and  $v^\top_\ell L=0$. 
 The time derivative of  ${\mathcal H}_s$ along the trajectories \eqref{error_traj} is: 
 \begin{align}
 \dot{\mathcal H}_s&=  \rho e_x^\top P \dot e_x +2 \mu e_y^\top \dot e_y +\epsilon e_x^\top P \dot e_y +\epsilon \dot e_x^\top P e_y \nonumber \\
 &= \rho e_x^\top P e_y +2 \mu e_y^\top \left(-\gamma_1 L e_x -\gamma_2e_y -\gamma_3 e_d  \right) \nonumber \\
 &+ \epsilon e_x^\top P\left( -\gamma_1 L e_x -\gamma_2e_y -\gamma_3 e_d \right)+ \epsilon e_y^\top P e_y \nonumber \\
 &= e_x^\top \left( \rho P -2\gamma_1 \mu L^\top -\epsilon \gamma_2 P   \right) e_y -e_y^\top \left( 2 \mu  \gamma_2 I_n - \epsilon P \right)e_y\nonumber \\
 & -\tfrac{1}{2}\gamma_1\epsilon e_x^\top (PL+L^\top P) e_x -\epsilon \gamma_3 e_x^\top P e_d -2\mu \gamma_3e_y^\top e_d \nonumber \\
 &= e_x^\top \left( \rho P -2\gamma_1 \mu L^\top -\epsilon \gamma_2 P   \right) e_y -e_y^\top \left( 2 \mu  \gamma_2 I_n - \epsilon P \right)e_y \nonumber \\
 &- \tfrac{\epsilon}{2}\gamma_1e_x^\top e_x -\epsilon \gamma_3 e_x^\top P e_d -2\mu \gamma_3e_y^\top e_d,
 \label{dotHs}
  \end{align}
  where to get the last equality we have invoked Lemma \ref{lem2} with $Q=I_n$.  
  The time derivative of $\mathcal H_d$ satisfies the following:
\begin{align}
\dot {\mathcal H}_d =& \;b\left(e_y^\top (2 \dot e_y +\dot e_d)+ e_d^\top (\dot e_y +\dot e_d)\right) \nonumber \\
 =&\;  \;b\Big(e_y^\top \left\{-2 \Big [ \gamma_1 L e_x +\gamma_2e_y +\gamma_3 e_d\Big]+\dot e_d\right\} \nonumber \\
 &+ e_d^\top \left\{- \Big [ \gamma_1 L e_x +\gamma_2e_y +\gamma_3 e_d\Big] +\dot e_d \right\}\Big), \nonumber 
\end{align} 
where the term $\dot e_d$ is computed by making use of the definition of the integral action \eqref{intc2} as follows:
\begin{align}
\dot e_d=&\; (I_n-{\bf 1} v_\ell^\top) \dot{\tilde \delta}:=  (I_n-{\bf 1} v_\ell^\top)  ( \gamma_1 L x +\gamma_4 y) \nonumber \\
=&\;  \gamma_1  (I_n-{\bf 1} v_\ell^\top) L (e_x +{\bf 1} v_\ell x)+\gamma_4 e_y \nonumber \\
=&\; \gamma_1 L e_x +\gamma_4 e_y.
\end{align}
Replacing $\dot e_d$ into $\dot {\mathcal H}_d$ yields: 
\begin{align}
\dot {\mathcal H}_d =& \;b\left(e_y^\top (2 \dot e_y +\dot e_d)+ e_d^\top (\dot e_y +\dot e_d)\right) \nonumber \\
 =&\;  \;b\Big(e_y^\top \left\{- \gamma_1 L e_x -(2  \gamma_2-\gamma_4) e_y -2 \gamma_3 e_d \right\} \nonumber \\
 &+ e_d^\top \left\{- \gamma_2e_y - \gamma_3 e_d + \gamma_4 e_y \right\}\Big) \nonumber \\
 =&\; b\Big(- \gamma_1 e_y^\top  L e_x - e_y^\top (2  \gamma_2-\gamma_4) e_y \nonumber \\
 & -\gamma_3 e_d^\top e_d   + (\gamma_4 -2 \gamma_3  - \gamma_2 ) e_d^\top   e_y \Big).
 \label{dotHd}
\end{align} 
By using \eqref{dotHs} and \eqref{dotHd}, we can express $\dot{\mathcal H}$ in the compact form $\dot {\mathcal H}=-\frac{1}{2}e^\top {\mathcal N} e$, with an extended error vector $e=[e_x, \; e_y, \; e_d]^\top \in \rea^{3n}$ and a symmetric matrix defined as:
\begin{equation}
{\mathcal N}=  \left[ \begin{array}{ccc} \gamma_1 \epsilon I_n &    {\mathcal N}_{12}  & \gamma_3 \epsilon P \\ \star   & {\mathcal N}_{22}  & {\mathcal N}_{23}\\
\star &   \star & 2 \gamma_3 b I_n
\end{array} \right]. 
\end{equation}
with
\begin{align}
{\mathcal N}_{12}=&\; \gamma_1 (2 \mu +b) L^\top -(\rho -\epsilon \gamma_2) P   \nonumber \\
{\mathcal N}_{22}=& \;2 (2b  \gamma_2I_n- b \gamma_4 I_n+ 2 \mu  \gamma_2 I_n - \epsilon P) \nonumber \\
{\mathcal N}_{23}=&\;  I_n( 2\mu \gamma_3  +2b \gamma_3 +b \gamma_2-b\gamma_4).
\end{align}

For ease of presentation, let us introduce the following scalar parameters: 
\begin{align}
\gamma_4=2 \gamma_3\left(1 +\frac{\mu}{b} \right) +\gamma_2, \quad \epsilon = \frac{\rho}{\gamma_2},
\end{align}
which we can use to equivalently express the matrix ${\mathcal N}$ as: 
{\small
\begin{align}
\hspace{-.3cm}
{\mathcal N}=  \left[ \begin{array}{ccc}  \rho \frac{\gamma_1}{\gamma_2} I_n &    \gamma_1 (2 \mu +b) L^\top  & \gamma_3 \epsilon P \\ \star  & 2\gamma_2 (2 \mu +b) -2\frac{\rho}{\gamma_2} P- 4\gamma_3 (\mu+b) & 0 \\
\star &   0 & 2 \gamma_3 b I_n
\end{array} \right]. 
\nonumber
\end{align}
}

The stability proof of the system relies on the positive definitiveness of ${\mathcal N}$. Since we have two free Lyapunov parameters  $\rho$ and $b$ and several free control gains,  we apply the Schur complement to prove  that ${\mathcal N}>0$.  
A simple solution can be obtained by setting $\rho= \gamma_2$, then applying the Schur complement to the $2\times 2$ sub-block of ${\mathcal N}$ as:
$$
 \left[ \begin{array}{ccc}   \gamma_1 I_n &    \gamma_1 (2 \mu +b) L^\top \\ \gamma_1 (2 \mu +b) L  & 2\gamma_2 (2 \mu +b) -2 P- 4\gamma_3 (\mu+b) 
\end{array} \right]
$$
which yields the relation: 
\begin{align}
2\gamma_2 (2 \mu +b) -2P- 4\gamma_3 (\mu+b)-\gamma_1 (2 \mu+b)^2 L L^\top >0. 
\end{align}
By defining $\gamma_2$ as:
$$
\gamma_2 > \frac{1}{(2\mu+b)} \left( \lambda_p+2\gamma_3 (\mu+b)  \right)+\frac{1}{2}\gamma_1 (2 \mu+b) \lambda_L^2 
$$
with  $ |P| \leq \lambda_p $ and  $|L| \leq \lambda_L$, we can ensure that  $e_{xy}^\top {\mathcal N}_{22} e_{xy}\geq   |{\mathcal N}_{22} |  | e_{xy}|^2$, with $e_{xy}= \col(e_x, e_y)$.  
Finally, the positive definiteness of ${\mathcal N}$ is established with 
$$
b \geq \frac{\gamma_3}{\gamma_1} \lambda_p^2.
$$  
This ensures that $\lim_{t \to \infty} e(t)=0$, as consequence from \eqref{err2} and the mean field dynamics \eqref{mf} we have that \eqref{goal1} holds.
This completes the proof.
 \end{proof}
 
Now, the exact estimation of the input disturbances and the final states of the agents is presented in the following Corollary. Hence, we study the dynamic behavior of each agent, which is governed by the weighted average dynamics. 

\begin{corollary} \em
\label{cor1}
Consider the  mean-field coordinates \eqref{mf}, and assume that the positive gains  $\gamma_j$ with $j=2:4$ are such that: 
\begin{equation}
\label{matS}
{\mathcal S}=  \left[ \begin{array}{cc}      -\gamma_2  & -\gamma_3  \\ 
\gamma_4 & 0
\end{array} \right]
\nonumber
\end{equation}
 is a Hurwitz matrix. Then, each agent of the closed-loop system \eqref{tilsys2} satisfies:
\begin{align}
\lim_{t \to \infty}  x_{i}(t)  = x_{m}(0) +c_1
\end{align}
with $c_1$ as a positive constant, and the state variables
\begin{align}
\label{xd2}
y_{i} (t) \to 0, \quad  \tilde \delta  \to 0
\end{align}
exponentially converging to zero. As consequence:
\begin{align}
\lim_{t \to \infty} \hat \delta_i (t)= \frac{d_i}{\gamma_3}
\end{align}
guarantees the exact estimation of the disturbances.
\end{corollary}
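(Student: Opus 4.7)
The plan is to exploit the decomposition of each state into a mean-field component plus a consensus error and then combine Proposition 1 (which drives the consensus errors to zero) with a direct analysis of the autonomous mean-field dynamics.

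First I would project the closed-loop system \eqref{tilsys2} onto the left null-space of $L$ by pre-multiplying by $v_\ell^\top$. Using the identities $v_\ell^\top L = 0$ (Lemma 1 applied to $L^\top$, together with the assumption $v_\ell^\top \mathbf{1}_N = 1$) and the definitions in \eqref{mf}, this would yield the reduced mean-field system
\begin{align}
\dot x_m &= y_m, \nonumber \\
\begin{bmatrix} \dot y_m \\ \dot \delta_m \end{bmatrix} &= \mathcal{S}\begin{bmatrix} y_m \\ \delta_m \end{bmatrix},
\end{align}
where $\mathcal{S}$ is exactly the matrix defined in \eqref{matS}. The key point is that the $(y_m,\delta_m)$ block is autonomous and completely decoupled from $x_m$.

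Next, since $\mathcal{S}$ is Hurwitz by hypothesis, the linear subsystem for $(y_m,\delta_m)$ is globally exponentially stable, so $y_m(t)\to 0$ and $\delta_m(t)\to 0$ exponentially fast. Combining this with the conclusion of Proposition 1 that $e_y(t)\to 0$ and $e_d(t)\to 0$, and using the identities $y_i = y_m + [e_y]_i$ and $\tilde\delta_i = \delta_m + [e_d]_i$ obtained from \eqref{err2}, I would conclude that $y_i(t)\to 0$ and $\tilde\delta_i(t)\to 0$. The definition $\tilde\delta_i = \hat\delta_i - d_i/\gamma_3$ then immediately delivers $\hat\delta_i(t)\to d_i/\gamma_3$, giving the exact disturbance estimation.

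Finally, for the position variables, I would integrate $\dot x_m = y_m$ and use the exponential decay of $y_m$ to conclude that $y_m\in\mathcal{L}_1$, hence $x_m(t)$ converges to the finite limit $x_m(0) + c_1$ with $c_1 := \int_0^\infty y_m(\tau)\,\mathrm{d}\tau$. Since $e_x(t)\to 0$ by Proposition 1, we obtain $x_i(t)\to x_m(0)+c_1$ for every $i$, which is the remaining claim. The only delicate point in the whole argument is ensuring that $x_m$ has a well-defined limit rather than drifting; the exponential decay of $y_m$ guaranteed by the Hurwitz condition on $\mathcal{S}$ settles this, so no real obstacle arises. The rest is a direct assembly of Proposition 1 and the autonomous Hurwitz subsystem.
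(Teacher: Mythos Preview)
Your proposal is correct and follows essentially the same route as the paper: both derive the mean-field dynamics by applying $v_\ell^\top$ to \eqref{tilsys2}, recognize that the $(y_m,\delta_m)$ block is the autonomous linear system $\dot y_{\delta m}=\mathcal{S}\,y_{\delta m}$, use the Hurwitz assumption for exponential decay, integrate $\dot x_m=y_m$ to obtain a finite limit, and then combine with Proposition~\ref{pro1} to pass from mean-field to individual agents. Your handling of the last two steps (writing $c_1=\int_0^\infty y_m$ and deducing $\hat\delta_i\to d_i/\gamma_3$ directly from $\tilde\delta_i\to 0$) is in fact slightly cleaner than the paper's, but the underlying argument is the same.
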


\begin{proof}
The time derivative of \eqref{mf} along of the closed loop  \eqref{tilsys2} yields the following averaged model, which corresponds to {\it dynamic consensus}:
\begin{align}
\dot x_{m}=&\; v_\ell^\top y:= y_{m} \nonumber \\
\dot y_{m}= &- v_\ell^\top \left( \gamma_1 L x -\gamma_2 y - \gamma_3 \tilde \delta \right):= - \gamma_2 y_{m} - \gamma_3 \delta_m \nonumber \\
\dot \delta_m=&\; v_\ell^\top (\gamma_1 L x + \gamma_4 y):= \gamma_4y_{m}.
\label{dynav}
\end{align}
Where we note that the last two dynamic equations can be re-written as $\dot y_{\delta m}= {\mathcal S}  y_{\delta m}$ with $y_{\delta m}= \col(y_{m}, \delta_m) \in \rea^2$ and ${\mathcal S}$ as \eqref{matS}. 
This way, its unique solution can be computed as follows: 
\begin{equation}
\label{expx2d}
y_{\delta m}(t)=\exp^{{\mathcal S}t}y_{\delta m}(0).
\end{equation}
Since we assume that ${\mathcal S}$ is a Hurwitz, we invoke the Cayley-Hamilton theorem to establish \eqref{xd2}. 

On the other hand, the solution of the first equation of \eqref{dynav}  is 
\begin{align}
x_{m}(t)=&\; x_{m}(0)+\int_{0}^t y_{m}(s)ds \nonumber \\
=& \; x_{m}(0) +\int_{0}^t \exp^{-\frac{1}{c_1} s} ds :=  \; x_{m} (0)+c_1 \nonumber \\
\end{align}
where to get the second equality we have used the fact that $y_{\delta m}(t)$ converge exponentially with some constant $\frac{1}{c_1}>0$.

Finally, from Proposition \ref{pro1} is ensured that $x_{i}$ converges to $x_{m}$ which is  a {\it constant} and  from \eqref{expx2d} we have that $y_{i}$ and $\tilde \delta$ converge exponentially to zero,  then from the closed loop of each agent given by 
$$
\dot x_{i}= -\gamma_1 \sum\limits_{j=1}^{\mathcal N} a_{ij} (x_{i}-x_{j})-\gamma_2 y_{i}  -\gamma_3 \hat \delta_i  +d_i 
$$
we conclude that $\lim_{t \to \infty} \hat \delta_i = \frac{d_i}{\gamma_3}$ and exact estimation of the disturbance is guaranteed.
This completes the proof.
 \end{proof}
 
%It is worth considering that dynamic consensus establish the attractively of the manifold ${\mathcal S}= \{ e:=(e_x, e_y, e_d)=0\}$, as consequence if $e= 0$ we ensure that $x_i=v_\ell^\top x$, $y_i= v_\ell^\top y$ and $\tilde \delta_i= v_\ell^\top \tilde \delta$.
% 
Next, we present the second main result of the note, i.e., the case where the multi-agent system is subject to unmatched disturbances. 
In this situation, the unmatched disturbances generate a bias on the un-actuated channel (i.e., at the $\dot x_i$ level), which complicates the consensus problem.
In contrast with several solutions \cite{GUetal21, ZHENGetal23, XIAOetal22} that  rely on \emph{discontinuous} adaptive estimators and controllers, we propose a simple integral action which enables to achieve synchronization of all agents. 
  
\subsection*{Unmatched Disturbances}
  
The multi-agent system to be addressed is of the form:
  \begin{align}
  \label{sys3}
  \dot  x_i =& \; y_i  + d_i \quad x_i, y_i,\;  i \leq N \nonumber \\
  \dot y_i=& \; u_i 
  \end{align}
which clearly shows that the constant disturbance $d_i$ is {\it unmatched}, i.e., it cannot be directly cancelled by the input control $u_i$; The interconnection of the system is assumed to satisfy Assumption \ref{asp1}.
Similarly to the previous case, our goal is to design a robust controller such that the consensus can be established, i.e.:
\begin{align}
 \label{goal3}
\lim_{t\to \infty} &\quad |x_{i}(t)-\bar x_{m}(t) |=0, \;\; |\tilde y_{i}(t)- \bar y_{m}(t) |  =0 \nonumber \\
\lim_{t\to \infty} &\quad |\tilde \delta_{i}(t)-\bar \delta_m(t) |=0 
\end{align}
 with the so called {\it mean-field} dynamics  \cite{PANLOR17}
 \begin{align}
 \label{mf1}
 \bar x_{m}= v_\ell^\top x, \quad \bar y_{m}= v_\ell^\top \tilde y, \quad \bar \delta_m=v_\ell^\top \tilde \delta
 \end{align}
  which can be seen as a weighted average or equivalently 
\begin{align}
\label{goal4}
\lim_{t\to \infty} &\quad |x_{i}(t)-x_{j}(t) |=0, \;\; |\tilde y_{i}(t)-\tilde y_{j}(t) |  =0 \nonumber \\
\lim_{t\to \infty} &\quad |\tilde \delta_{i}(t)-\tilde \delta_j(t) |=0, \quad i \not=j, 
\end{align}  
with  $x= [x_{1}, \cdots, x_{N}]^\top$ $\in \rea^n$,  $\tilde y= [\tilde y_{1},  \cdots, \tilde y_{N}]^\top \in \rea^n$ and $ \tilde \delta = [\tilde \delta_1, \cdots, \tilde \delta_N]^\top \in \rea^n$, where  $\tilde y_{i} = y_{i}- k_s \hat \delta_i$ and  $\tilde \delta_i=k_s  \hat \delta +d_i$, for $\hat \delta (t)$ as a new state variable to be defined later with $k_s$ a free positive gain.

\begin{proposition}   \em
Consider the system \eqref{sys3} in closed loop with the controller
  \begin{align}
u_i= &\; -k_x \sum\limits_{j=1}^{\mathcal N} a_{ij} (x_i-x_j)-k_d \tilde y_i -k_s (  \alpha_1 x_i + \nu \tilde y_i) \nonumber  \\
\dot {\hat \delta}_i= &\; - \alpha_1 x_i- \nu \tilde y_i  
\label{intc3}
\end{align}
for positive control gains $k_x, k_s, k_d, \alpha_1, \nu$. % and disturbance rejection error  $\bar \delta_i = k_s \hat \delta_i+d_i$. 
%By defining the extended vectors 
From the extended vectors  $x$, $\tilde y$, $ \tilde \delta$ and defining $\bar y= [y_1+d_1, \cdots, y_N+d_N]^\top$, the closed-loop multi-agent system can be expressed in the following compact form:
\begin{align}
\label{tilsys3}
\dot x&= \bar y \nonumber \\
\dot {\bar y} &= -k_x L x-k_d \tilde  y  -k_s ( \alpha_1 x + \nu   \tilde  y)  \nonumber  \\
\dot {\tilde \delta} &=  -\alpha_1 x -\nu \tilde y 
\end{align}
with state synchronization error vectors
\begin{align}
\label{err3}
e_x=&\;(I_n-{\bf 1} v_\ell^\top) x, \quad e_y=\;(I_n-{\bf 1} v_\ell^\top) \tilde y \nonumber \\
e_d=&\;(I_n-{\bf 1} v_\ell^\top) \tilde \delta 
\end{align}  
This closed-loop system ensures the converge to zero of the dynamic consensus, i.e., $(e_x, e_y, e_d)$ $\to 0$, with a Lyapunov function defined as:
 \begin{align}
 \label{mW}
 {\mathcal W} =& \; \frac{1}{2}  \left[ \begin{array}{cc} e_x^\top  & e_y^\top  \end{array} \right]\left[ \begin{array}{cc} \alpha_1 P &  \nu P \\ \nu P & \alpha_2 I_n \end{array} \right]   \left[ \begin{array}{c} e_x \\ e_y  \end{array} \right] + \frac{1}{2} e_d^\top P e_d
 \end{align}
for a symmetric positive definite matrix $P=P^\top > 0$ satisfying $\sqrt{\alpha_1 \alpha_2 \over |P|}> \nu$ so that ${\mathcal W}>0$. \end{proposition}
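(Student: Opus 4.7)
The plan is to mirror the strategy used for the matched-disturbance case in Proposition 1: first pass to the synchronization-error coordinates, then differentiate the candidate Lyapunov function $\mathcal{W}$ along those error dynamics, invoke Lemma \ref{lem2} to convert $e_x^\top(PL+L^\top P)e_x$ into $|e_x|^2$, and finally impose algebraic conditions on $(\alpha_1,\alpha_2,\nu,k_x,k_d,k_s)$ that make the resulting quadratic form in $e$ negative (semi)definite.

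First, I would derive the closed-loop dynamics in error coordinates. Using $L\mathbf{1}_N=0$, $v_\ell^\top L=0$, and the fact that $v_\ell^\top e_x = v_\ell^\top e_y = v_\ell^\top e_d = 0$ (which follows from applying $v_\ell^\top$ to the projector $I_n-\mathbf{1}v_\ell^\top$), the definitions $\tilde y = y - k_s\hat\delta$ and $\tilde\delta = k_s\hat\delta + d$ imply $\bar y = \tilde y + \tilde\delta$, and direct substitution of the control law cancels the $k_s(\alpha_1 x + \nu\tilde y)$ terms. This gives error equations of the form
\begin{align*}
\dot e_x &= e_y + e_d, \\
\dot e_y &= -k_x L e_x - k_d e_y, \\
\dot e_d &= -\alpha_1 e_x - \nu e_y,
\end{align*}
where the projection onto the synchronization subspace has absorbed all mean-field terms.

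Next I would compute $\dot{\mathcal{W}}$ along these trajectories. A useful observation is that the cross contributions $\alpha_1 e_x^\top P e_d$ from $\dot{\mathcal{W}}_1$ and $-\alpha_1 e_d^\top P e_x$ from $e_d^\top P\dot e_d$ cancel because $P=P^\top$, and similarly the pair $\nu e_d^\top P e_y$ and $-\nu e_d^\top P e_y$ cancel; so $e_d$ does not appear in $\dot{\mathcal{W}}$. Setting $\alpha_1 = k_d\nu$ to kill the remaining $e_x^\top P e_y$ cross term, and invoking Lemma \ref{lem2} with $Q=I_n$ together with $v_\ell^\top e_x = 0$ to reduce $-k_x\nu\,e_x^\top PL e_x$ to $-\tfrac12 k_x\nu |e_x|^2$, leaves a clean expression
\[
\dot{\mathcal{W}} = -\tfrac12\begin{bmatrix} e_x \\ e_y\end{bmatrix}^{\!\top}\!\!\begin{bmatrix} k_x\nu I_n & k_x\alpha_2 L^\top \\ k_x\alpha_2 L & 2(k_d\alpha_2 I_n - \nu P) \end{bmatrix}\!\!\begin{bmatrix} e_x \\ e_y \end{bmatrix}.
\]
A Schur-complement computation, analogous to the one used in Proposition \ref{pro1}, then yields explicit lower bounds on $k_d$ (to make $k_d\alpha_2 I_n - \nu P > 0$) and on $\alpha_2$ or $k_x$ (to dominate $k_x^2\alpha_2^2 \lambda_L^2$ by $k_x\nu \cdot 2(k_d\alpha_2-\nu|P|)$) under which this matrix is positive definite, producing $\dot{\mathcal{W}}\le -c(|e_x|^2+|e_y|^2)$ for some $c>0$.

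The main obstacle, and the point where this proof diverges from the matched case, is that $e_d$ is absent from the right-hand side of $\dot{\mathcal{W}}$, so negative definiteness gives direct convergence only for $(e_x,e_y)$; I do \emph{not} expect a pure Lyapunov inequality to force $e_d\to 0$. The natural remedy is LaSalle's invariance principle: $\mathcal{W}$ is radially unbounded in all three errors (by the condition $\sqrt{\alpha_1\alpha_2/|P|}>\nu$ on the sub-block and positivity of $P$), so trajectories are bounded, and any limit point lies in the largest invariant set inside $\{\dot{\mathcal{W}}=0\}=\{e_x=0,e_y=0\}$. On that set the first error equation reduces to $\dot e_x = e_d$, which must also vanish for invariance, forcing $e_d=0$. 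This yields $(e_x,e_y,e_d)\to 0$, which together with the mean-field relations \eqref{mf1} gives the claimed dynamic consensus \eqref{goal3}--\eqref{goal4} and completes the proof.
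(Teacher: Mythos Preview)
Your proof is correct and follows essentially the same route as the paper: identical error dynamics, the same cancellation of the $e_d$-terms in $\dot{\mathcal W}$, the same parameter choice $\alpha_1=k_d\nu$, the same invocation of Lemma~\ref{lem2} with $Q=I_n$, and the same Schur-complement condition on the resulting $2\times 2$ block matrix. The only difference is in the last step: the paper appeals to Barbalat's lemma (via $(e_x,e_y)\in\mathcal L_2\cap\mathcal L_\infty$ and boundedness of $\dot e_d$) to obtain $e_d\to 0$, whereas you use LaSalle's invariance principle and the relation $\dot e_x=e_d$ on the set $\{e_x=e_y=0\}$; both arguments close the gap left by the semidefiniteness of $\dot{\mathcal W}$.
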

 
\begin{proof}
By computing the time derivative of $e_x$, we obtain:
\begin{align}
 \dot e_x=& \; (I_n-{\bf 1}_N v_\ell^\top)(y+d)= (I_n-{\bf 1}_N v_\ell^\top) (y+d \pm k_s \hat \delta )  \nonumber  \\
  =& \;  (I_n-{\bf 1}_N v_\ell^\top) (y -  k_s \hat \delta + k_s \hat \delta + d )  \nonumber \\
  =& \;  (I_n-{\bf 1}_N v_\ell^\top) (\tilde y  + \tilde  \delta )  \nonumber \\
  =& \; e_y + e_d.   
  \label{dotexu}
  \end{align}
  By making use of the controller \eqref{intc3} and \eqref{err3}, the time derivative of $e_y$ yields:
  \begin{align}
   \dot e_y=& (I_n-{\bf 1}_N v_\ell^\top) (\dot y - k_s \dot {\hat \delta})\nonumber \\
   =& -(I_n-{\bf 1}_N v_\ell^\top)   \left( k_x L  x  -  k_d \tilde y  +  k_s ( \alpha_1 x + \nu \tilde y) + k_s \dot{\hat \delta} \right) \nonumber \\
=& - (I_n-{\bf 1}_N v_\ell^\top) (k_x L x +k_d \tilde y )   \nonumber \\
 =& -k_x (I_n-{\bf 1}_N v_\ell^\top)L(e_x+{\bf 1}_Nv_\ell^\top x) -k_d e_y  \nonumber \\
 =& -k_x L e_x -k_de_y, 
 \label{dotes}
\end{align}
 where to get the last equality we used the facts   $L {\bf 1}_N=0$ and $v^\top_\ell L=0$, and the definitions \eqref{err3}. 
 The time derivative of  ${\mathcal W}$ along  \eqref{dotes} and \eqref{dotexu} is given by:
 \begin{align}
 \dot{\mathcal W}=& \left[ \begin{array}{cc} e_x^\top  & e_y^\top  \end{array} \right]\left[ \begin{array}{cc} \alpha_1 P &  \nu P \\ \nu P & \alpha_2 I_n \end{array} \right]   \left[ \begin{array}{c} e_y+ e_d \\ -k_x Le_x -k_d e_y  \end{array} \right]   \nonumber \\
 & +e_d^\top P \dot e_d \nonumber \\
 =& \left[ \begin{array}{cc} e_x^\top  & \hspace{-.7mm} e_y^\top  \end{array} \right]    \left[ \begin{array}{c} \alpha_1 P(e_y+ e_d) - \nu P (k_x Le_x +k_d e_y)  \\ +\nu P(e_y+ e_d)  -\alpha_2(k_x Le_x +k_d e_y)  \end{array}  \hspace{-.6mm} \right]   \nonumber \\
 & +e_d^\top P \dot e_d \nonumber \\
 =&\; e_x^\top \left( \alpha_1 P-\nu k_d P-\alpha_2 k_x L^\top \right) e_y -e_y^\top \left( \alpha_2 k_d - \nu P     \right)e_y\nonumber \\
 & -\frac{1}{2} \nu k_x     e_x^\top (PL+L^\top P) e_x + \alpha_1 e_x^\top P e_d + \nu  e_y^\top P e_d \nonumber \\
  & +e_d^\top P \dot e_d \nonumber \\
   =&\; e_x^\top \left( \alpha_1 P-\nu k_d P-\alpha_2 k_x L^\top \right) e_y -e_y^\top \left( \alpha_2 k_d - \nu P     \right)e_y\nonumber \\
 & -\frac{1}{2} \nu k_x     e_x^\top e_x +e_d^\top  P(  \alpha_1   e_x + \nu  e_y+ \dot e_d)  
 \label{dotW}
  \end{align}
where to get the last equality we have invoked Lemma \ref{lem2} with $Q=I_n$.
The term $\dot e_d$ is computed by making use of the definition of the integral action \eqref{intc3} as follows: 
\begin{align}
\dot e_d=&\; (I_n-{\bf 1} v_\ell^\top) \dot{\tilde \delta}:=  -(I_n-{\bf 1} v_\ell^\top) ( \alpha_1 x +\nu \tilde y) \nonumber \\
=&\; -\alpha_1 e_x - \nu e_y.
\label{dotedu}
\end{align}
By replacing $\dot e_d$ into $\dot {\mathcal W}$ and setting $\nu= \frac{\alpha_1}{ k_d}$ we can express $\dot{\mathcal W}$ into the compact form: $\dot {\mathcal W}=-\frac{1}{2}e_{xy}^\top {\mathcal M} e_{xy}$, with an extended error vector $e_{xy}=\col(e_x, \; e_y) \in \rea^{2n}$ and a symmetric matrix defined as:
\begin{equation}
{\mathcal M}=  \left[ \begin{array}{ccc} \frac{\alpha_1}{k_d} k_x I_n &   \alpha_2 k_x L^\top  \\ 
 \alpha_2 k_x L &    2 (\alpha_2 k_d - \frac{\alpha_1}{k_d}P)
\end{array} \right]. 
\end{equation}
By setting $\alpha_1= k_d$ and computing the Schur complement of ${\mathcal M}$, we can show that ${\mathcal M}$ is positive semi-definite if the matrix satisfies:
$$
{\mathcal D}= 2\left(\alpha_2 k_d - P \right)- \alpha_2^2 k_xL L^\top\geq 0
$$
which can be easily ensured by defining the free control parameter $k_d>0$ such that: 
$$
k_d > \frac{1}{2} \alpha_2 k_x \lambda_L^2  +\frac{1}{\alpha_2} \lambda_P.
$$
With these conditions, we can ensure that $\dot {\mathcal W}\leq  - |{\mathcal M}|  | e_{xy}|^2$. 
Since ${\mathcal W}$ is positive definite and radially unbounded with respect to $e_x$, $e_y$ and $e_d$, it follows that { $(e_x, e_y)\in{\mathcal L}_2\cap\mathcal L_\infty$ } and $e_d \in\mathcal L_\infty$. 
This, in turn, implies that $\dot e_d \in\mathcal L_\infty$, thus, from the Barbalat Lemma, we conclude that $\lim\limits_{t\to\infty}e_d (t)=0$. 
This completes the proof. 
 \end{proof}

In the following Corollary \ref{cor2}, we prove the synchronization of all agents to a periodic (oscillatory) state. 
Similar to Corollary \ref{cor1}, we based our analysis on the weighted average dynamics. 

\begin{corollary} \label{cor2} \em 
Consider the mean-field coordinetes  \eqref{mf1} and the closed-loop dynamic \eqref{tilsys3}. 
Then, each agent converge to a periodic orbit governed by the following forced harmonic oscillator: 
\begin{equation}
\label{dynosc}
\left[ \begin{array}{c} \dot {\bar x}_{m} (t)   \\ \dot{\bar \delta}_{m} (t) \end{array} \right]:=\left[ \begin{array}{cc} 0  &  1 \\ -\alpha_1 & 0 \end{array} \right]   \left[ \begin{array}{c} {\bar x}_{m}(t) \\ {\bar \delta}_{m}(t)  \end{array} \right]  +\left[ \begin{array}{c} 1 \\1  \end{array} \right] {\bar y}_{m}(t)
\end{equation}
where the forcing term $\bar {y}_{m}(t)$ exponentially converges to zero, as it satisfies:
\begin{equation}
\label{x2m2}
\bar {y}_{m}(t)= \bar y_{m}(0)\exp^{-k_dt}.
\end{equation}
Moreover, 
\begin{equation}
\label{barx22}
\lim_{t \to \infty}  |\bar y_{i} (t) -  \bar \delta_m(t)|=0.
\end{equation}
is ensured by the closed-loop dynamic system.
\end{corollary}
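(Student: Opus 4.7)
The plan is to derive the mean-field dynamics by left-multiplying each equation of \eqref{tilsys3} by $v_\ell^\top$ and exploiting $v_\ell^\top L = 0$. First I would note the key algebraic identity $\bar y_i = y_i + d_i = (y_i - k_s\hat\delta_i) + (k_s\hat\delta_i + d_i) = \tilde y_i + \tilde\delta_i$, so that $\dot x = \tilde y + \tilde\delta$. Projecting onto $v_\ell$ then yields $\dot{\bar x}_m = \bar y_m + \bar\delta_m$, and from the integrator dynamics $\dot{\tilde\delta} = -\alpha_1 x - \nu \tilde y$ we obtain $\dot{\bar \delta}_m = -\alpha_1 \bar x_m - \nu \bar y_m$. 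These two equations together constitute the forced harmonic oscillator structure \eqref{dynosc}, whose unforced part has eigenvalues $\pm j\sqrt{\alpha_1}$, so the limit dynamics (once the forcing decays) describes a periodic orbit of frequency $\sqrt{\alpha_1}$.

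Next, to establish \eqref{x2m2} I would exploit the fact that the $\tilde y$-coordinate is designed precisely so that the integrator action cancels the $k_s$ terms in the controller. Computing $\dot{\tilde y} = \dot y - k_s \dot{\hat\delta}$ and substituting \eqref{intc3} gives the decoupled equation $\dot{\tilde y} = -k_x L x - k_d \tilde y$. Left-multiplying by $v_\ell^\top$ and applying $v_\ell^\top L = 0$ collapses this to the scalar linear ODE $\dot{\bar y}_m = -k_d \bar y_m$, which integrates directly to \eqref{x2m2}. This exponentially decaying forcing drives the oscillator \eqref{dynosc} to a steady-state periodic orbit, as claimed.

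Finally, \eqref{barx22} follows by combining the synchronization result of the preceding Proposition with the identity $\bar y_i = \tilde y_i + \tilde\delta_i$. Since $e_y \to 0$ and $\bar y_m \to 0$, we have $\tilde y_i = (e_y)_i + \bar y_m \to 0$; since $e_d \to 0$ we have $\tilde\delta_i \to \bar\delta_m$. Therefore $\bar y_i - \bar\delta_m = \tilde y_i + (\tilde\delta_i - \bar\delta_m) \to 0$, which is the desired claim.

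The main obstacle will be the careful bookkeeping among the four quantities $y_i$, $\bar y_i = y_i + d_i$, $\tilde y_i = y_i - k_s \hat\delta_i$, and $\tilde\delta_i = k_s\hat\delta_i + d_i$: the decomposition $\bar y_i = \tilde y_i + \tilde\delta_i$ is the linchpin that both produces the clean cancellation $\dot{\tilde y} = -k_x L x - k_d \tilde y$ and delivers \eqref{barx22}. Once this decomposition is in place, the remainder of the argument reduces to projecting linear dynamics onto $v_\ell$ and invoking Proposition~2 together with standard results for forced linear oscillators.
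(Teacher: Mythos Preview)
Your proposal is correct and follows essentially the same route as the paper: project the closed-loop equations \eqref{tilsys3} onto $v_\ell$ using $v_\ell^\top L=0$ to obtain the mean-field dynamics \eqref{dynav1}, solve the decoupled scalar equation $\dot{\bar y}_m=-k_d\bar y_m$ to get \eqref{x2m2}, and then combine the synchronization result of the preceding Proposition with the exponential decay of $\bar y_m$ to obtain \eqref{barx22}. Your explicit use of the identity $\bar y_i=\tilde y_i+\tilde\delta_i$ and your invocation of $e_y\to 0$, $e_d\to 0$ for the final step are in fact slightly cleaner than the paper's somewhat terse passage from $\bar y_m\to 0$ to $y_i\to k_s\hat\delta_i$, but the underlying argument is the same.
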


\begin{proof}
The dynamic behavior of each agent is governed by the weighted average dynamics. 
This results from computing the time derivative of \eqref{mf1}:
\begin{align}
\dot {\bar x}_{m}=&\;  v_\ell^\top\bar y := v_\ell^\top \bar y  \pm v_\ell^\top k_s \hat \delta:=  \bar y_{m} + \bar \delta_m    \nonumber \\
\dot {\bar y}_{m}= &- v_\ell^\top \left( k_x L x +k_d  \tilde y  \right):= - k_d \bar y_{m}  \nonumber \\
\dot {\bar \delta}_m=& -v_\ell^\top (\alpha_1 x + \nu  \tilde y):= - \alpha_1 \bar x_{m}- \nu \bar y_{m}.
\label{dynav1}
\end{align}
Here, we have that the solution of the second equation is \eqref{x2m2}, which ensures that $y_{m}(t)$ converges exponentially to zero.
In turn, it implies that $y_{i} \to k_s \hat \delta_i $ exponentially. 
On the other hand, since $\bar {y}_{m}(t)$ converges exponentially to zero, the first and last equations of \eqref{dynav1} can be approximated as \eqref{dynosc}
 and their solutions clearly are continuous periodic signals. 
Finally, since  $y_{i} \to k_s \hat \delta_i $ exponentially and   $\tilde \delta_i (t) \to  \bar \delta_m(t)$, we have that 
\begin{align}
k_s \hat \delta_i (t)+d_i \to & \; \bar \delta_m(t)  \;\;\Rightarrow \;\; k_s \hat \delta_i (t) \to \bar \delta_m(t) -d_i
\end{align}
when  $t \to \infty$. Then, we can conclude that $y_{i}$ converges to the periodic signal $\bar \delta_m(t) -d_i$ or equivalently \eqref{barx22} holds.
This completes the proof.
 \end{proof}

\begin{remark} \em
 All the results can be extended to the case $x_i, y_i, u_i \in \rea^p$ with $p>1$ using the Kronecker product.
\end{remark}

\begin{remark} \em
In contrast with most robust controllers proposed in the literature (e.g., \cite{LVetal17, LIUWUZHO, WANGetal21}), in our case the disturbance rejection for matched disturbances is ensured with a simple continuous integral action. 
\end{remark}

\begin{remark} \em
The  proposed integral action has the same structure as the used in \cite{ROMNUNALD} to reject disturbances in a multi-agent system composed of nonholonomic robots. However, the Lyapunov function in that work is very complex compared with the formulated in this work. Moreover, the  interconnection in that work is described by  undirected graph.
\end{remark}

\begin{remark} \em
The presence of unmatched disturbances is very common in power systems, e.g., the DC-DC Buck power converter and the DC-AC converter, permanent magnet synchronous motors and inductor motor are systems. 
The generation of a resonant behavior as a sinusoidal signal in the voltages is a frequent task in these systems \cite{SAIISHISH, ANGOLITAB}.
\end{remark}

\begin{remark} \em
In contrast with classical solutions to the problem of multi-agents with unmatched disturbances, in this note we do not use discontinuous observers or discontinuous controllers. 
\end{remark}

 \section{Simulations}
 To validate the proposed dynamic consensus controller, in this section we carry out simulations using five agents.  According to Assumption \ref{asp1} the direct graph   is chosen as Fig. \ref{fig1}, where its corresponding Laplacian matrix appears on the right hand side of the figure. Moreover, to assess the robustness of the controller we also consider input disturbances subjected to {\it step changes} with a vanishing function, thus the  disturbances  commute  after 50 seconds from $d=\col( 0.1;-0.1;  0.2; -0.2; 0.1) + \frac{1}{12+t} $ to  $d=\col(0.2;-0.2;-0.1;0.2;-0.3) +\frac{1}{12+t} \exp(-0.2t)$.  The control gains were chosen as $\gamma_1=6$, $\gamma_2=17$, $\gamma_3=4$ and $b=10$.
  
 \subsection*{Matched Disturbances}
 
For this case, the control gains were chosen as $\gamma_1=6$, $\gamma_2=17$, $\gamma_3=4$ and $b=10$.
 
In  Figs. \ref{fig2} and \ref{fig3} we appreciate as all agents reach the consensus even in the presence of the time varying disturbances. Moreover,  Fig. \ref{fig4}  shows as the integral action $\hat \delta$  converges to the value $\frac{d(t)}{\gamma_3}$ as is predicted by the theory. For this,  notice that at time $t=50$ the term $\frac{1}{12+t}=0.0161$ so that $d(50)=\col(0.029, -0.021, 0.054, -0.046, 0.029) $, hence from the zoomed in of the left hand side of Fig \ref{fig4}  we see that $\hat \delta (50)= d(50)$.  On the other hand, after  50 sec, the disturbance commute to a different value where appears  the term $\exp(-0.2t)$. In this case we have that  at time $t=100$ that signal is equal to zero hence $\hat \delta (100) \to \col(0.05,-0.05, -0.025, 0.05, -0.075 ) $ as is appreciated in the zoomed in of the right hand side of Fig.  \ref{fig4}. Finally to corroborate the above discussion, in Fig. \ref{fig5} the converge to zero of $\tilde \delta$ is presented.  
  \begin{figure}[htp]
 \centering
\includegraphics[width=1\linewidth]{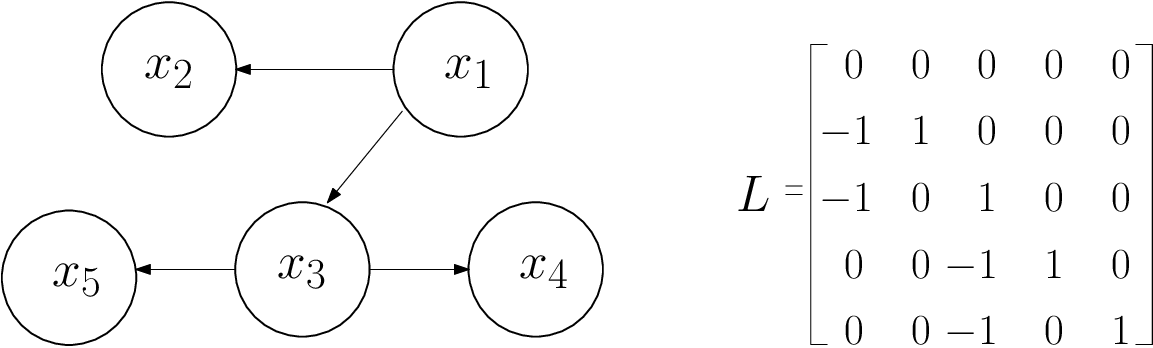}
\caption{A spanning-tree graph and its corresponding Laplacian}
\label{fig1}
\end{figure}
 
   \begin{figure}[htp]
 \centering
\includegraphics[width=1.05\linewidth]{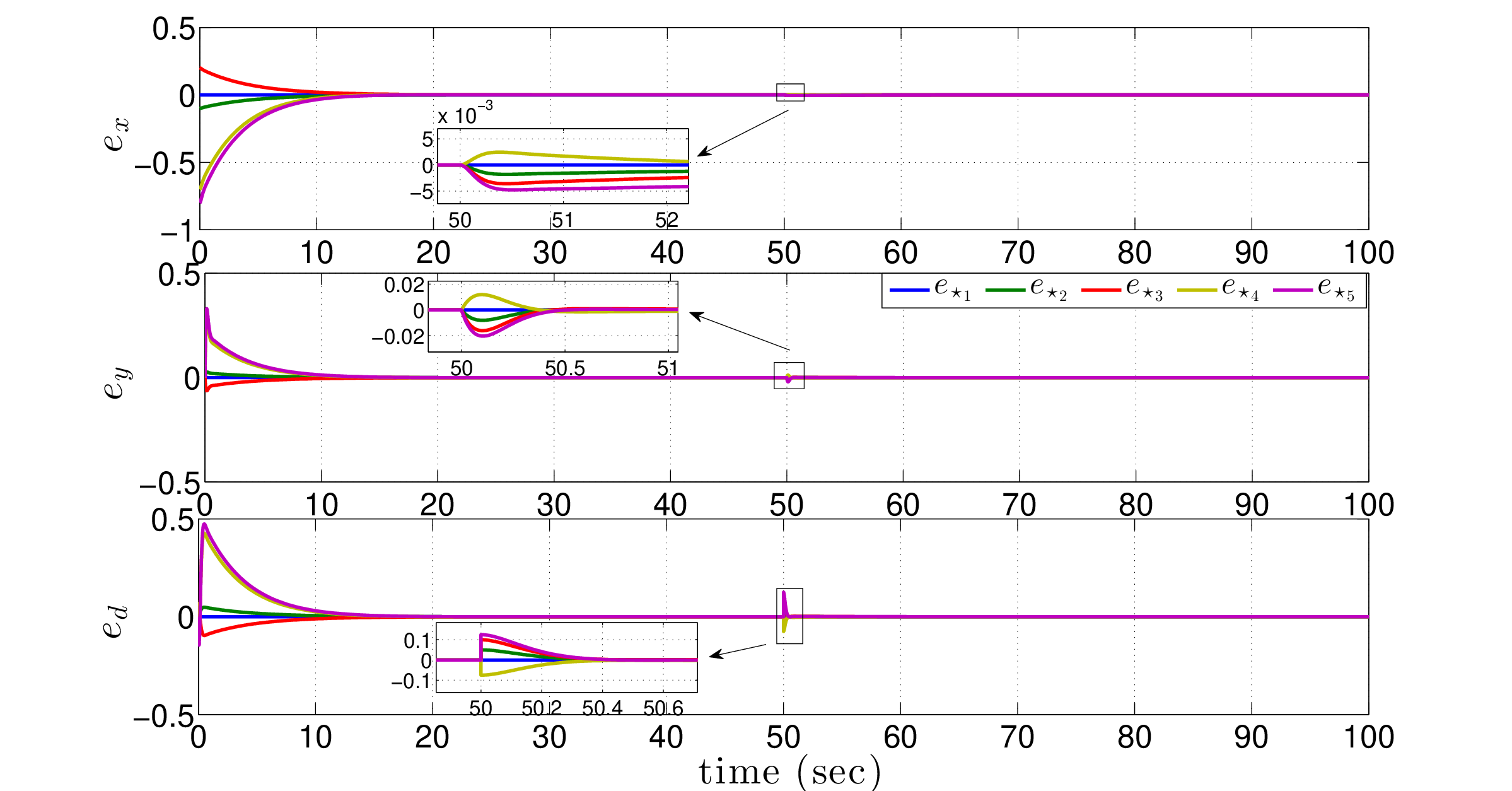}
\caption{Transient behavior of $e_{\star_i}$ with $i:1:5$}
\label{fig2}
\end{figure}

   \begin{figure}[htp]
 \centering
\includegraphics[width=1.05\linewidth]{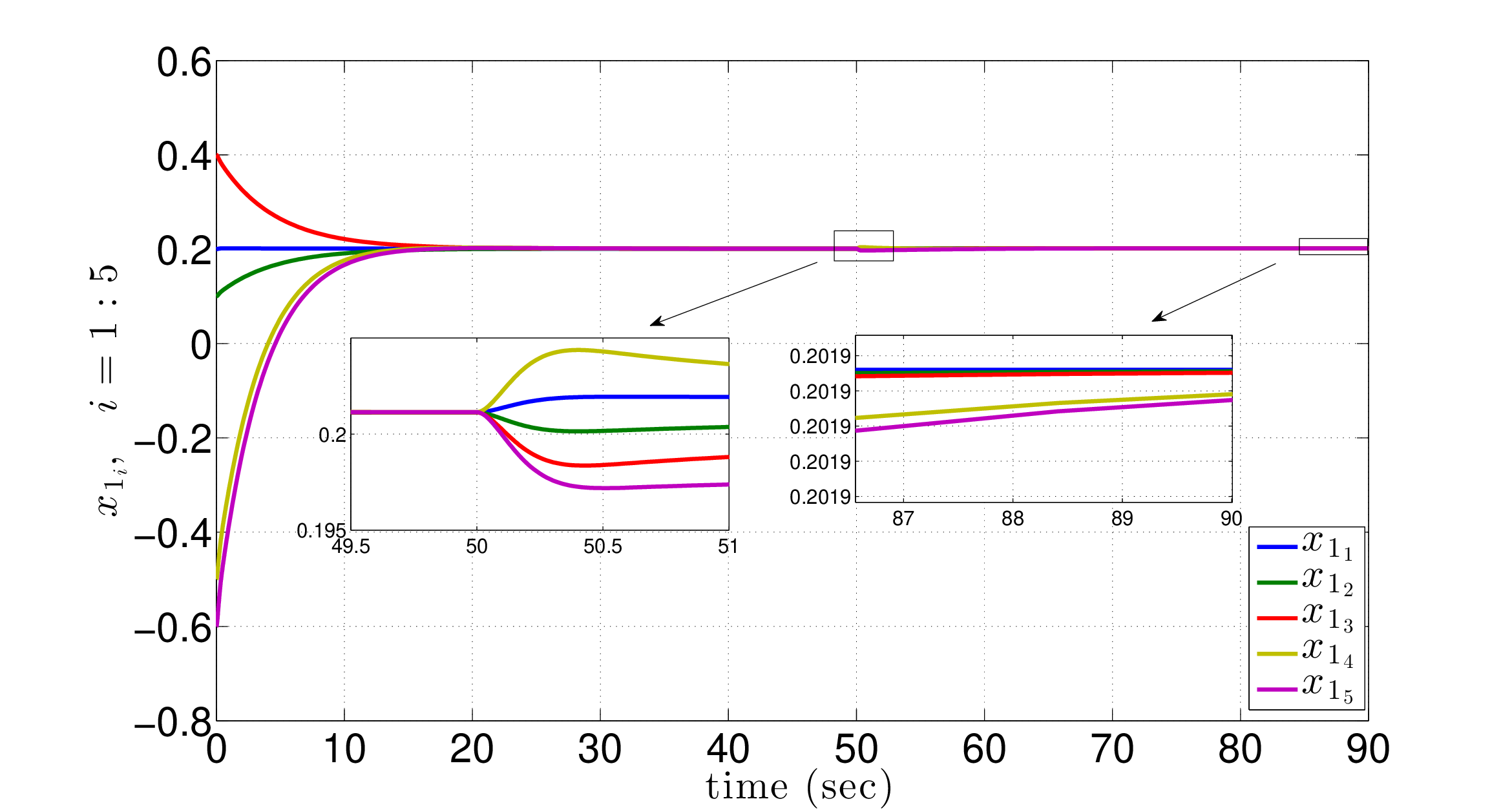}
\caption{Transient behavior of $x_i$ with $i:1:5$}
\label{fig2}
\end{figure}

   \begin{figure}[htp]
 \centering
\includegraphics[width=1.05\linewidth]{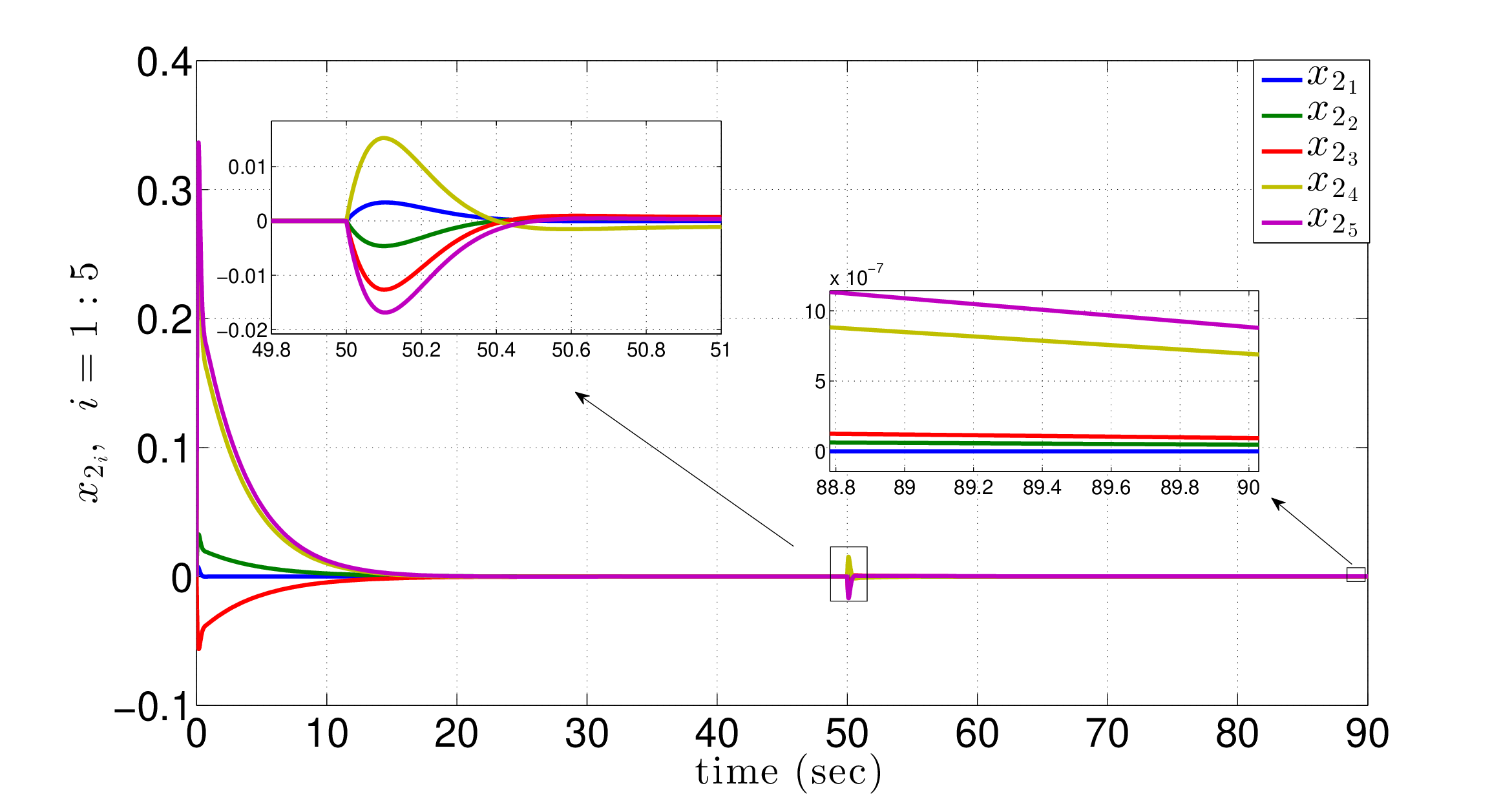}
\caption{Transient behavior of $y_i$ with $i:1:5$}
\label{fig3}
\end{figure}

     \begin{figure}[htp]
 \centering
\includegraphics[width=1.05\linewidth]{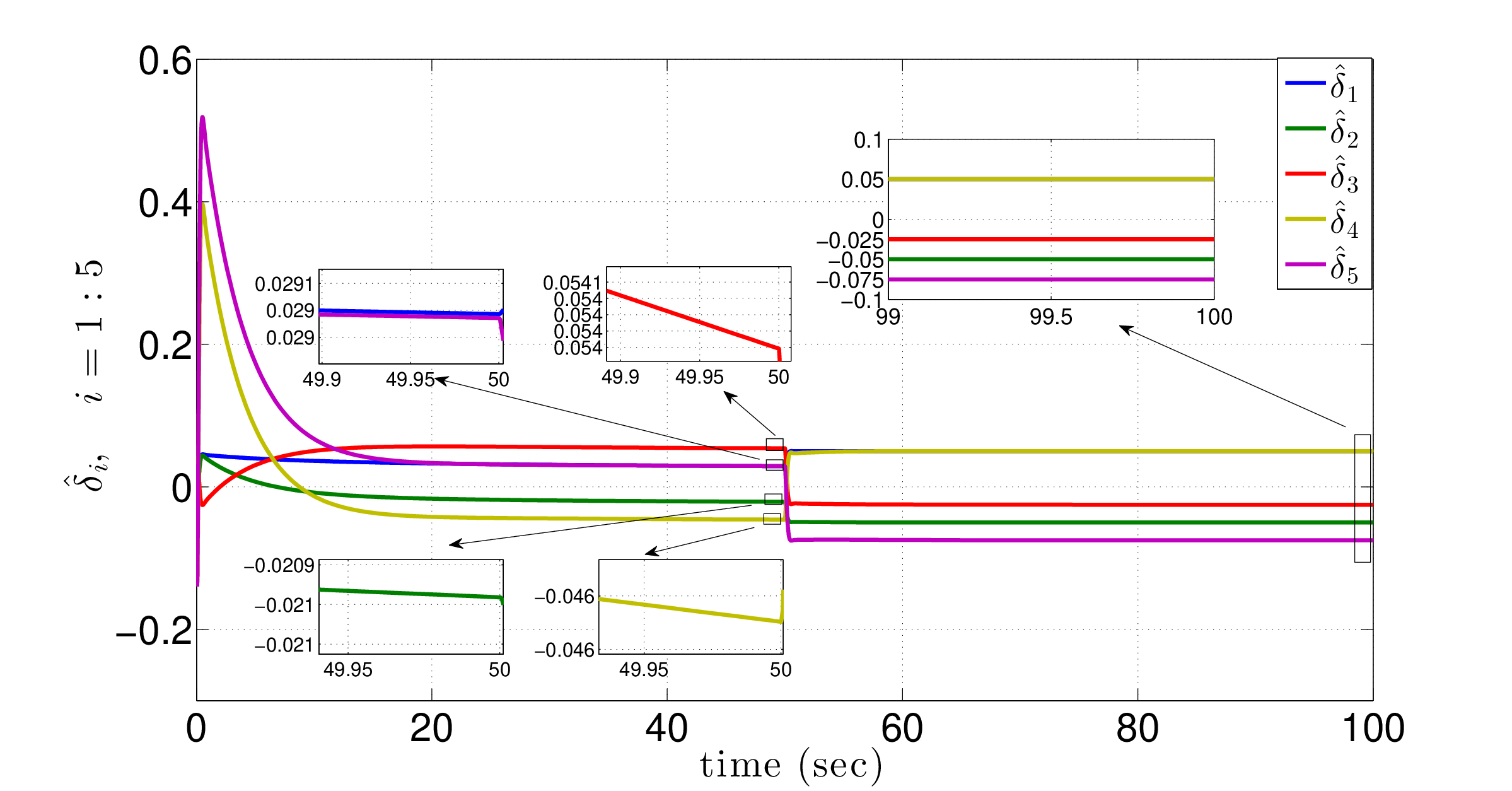}
\caption{Transient behavior of the integral action $\hat \delta_i$ with $i:1:5$}
\label{fig4}
\end{figure}

    \begin{figure}[htp]
 \centering
\includegraphics[width=1.05\linewidth]{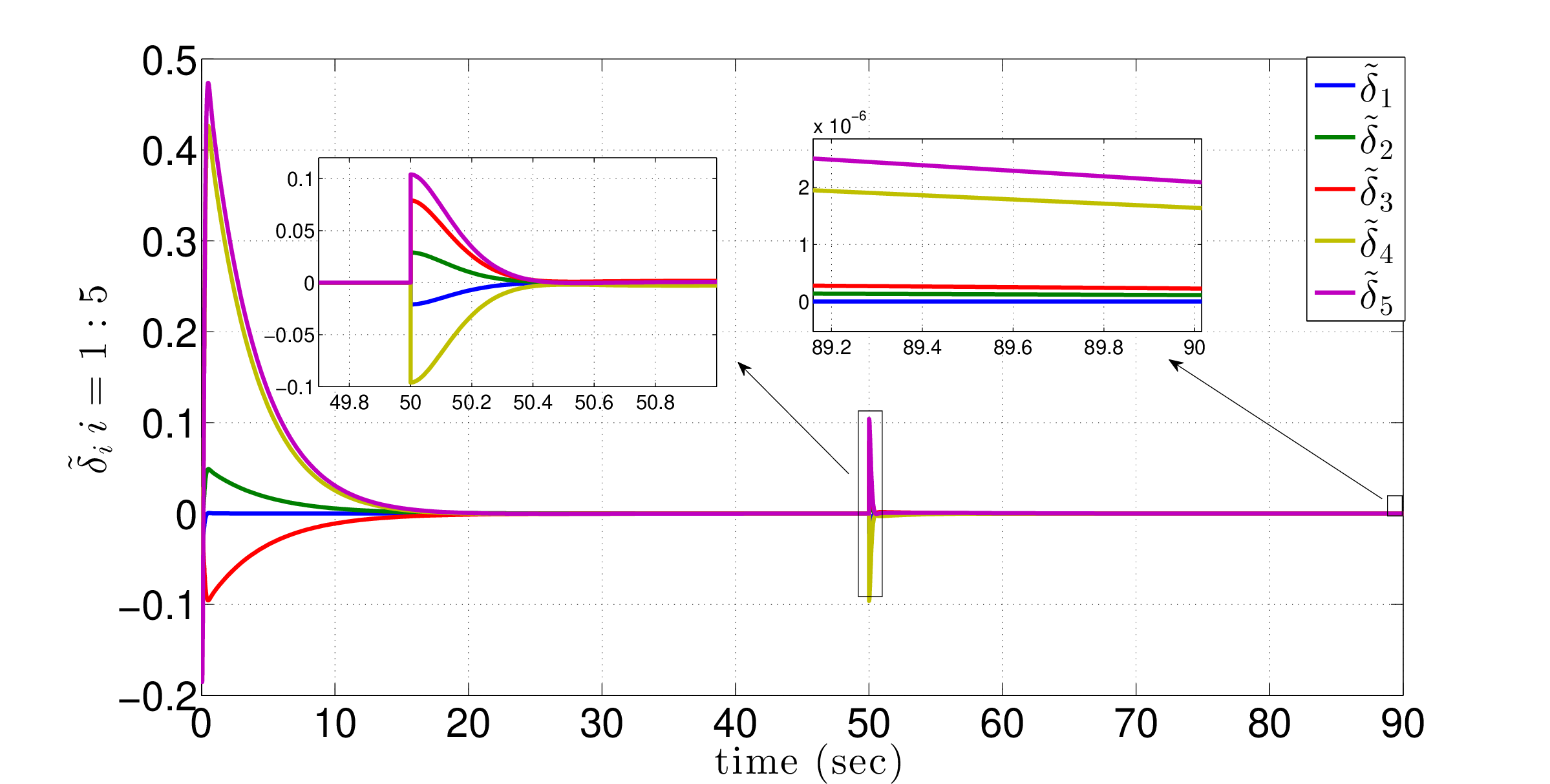}
\caption{Transient behavior of the error estimators $\tilde \delta_i$ with $i:1:5$}
\label{fig5}
\end{figure}

\subsection*{Unmatched Disturbances}
To carry out the simulations we used the same time varying disturbances as the matched case, but the disturbances commute after 20 seconds. The control gains were chosen as $\alpha_1=k_d=7.5$, $\nu=3$,  $k_s=5$ and $k_x= 3.4$. 

 Fig. \ref{fig7} shows the convergence to zero of the synchronization errors.  According to Corollary \ref{cor2}, in Fig. \ref{fig8} the periodic synchronization of all agents $x_{1i}$ (called {\it output signal})is ensured, as well as for  $\tilde \delta_{i}$, see  Fig. \ref{fig10} and convergence to zero of $\tilde \x_{2_i}$ in Fig. \ref{fig9}. Moreover, in contrast with all works addressing the consensus of unmatched disturbances, we notice in Fig. \ref{fig11} the periodic synchronization of $\bar x_{2_i}$. It is important to stress that the behavior of the agents does not suffer any important change under  the presence of the commutation of the disturbances. This bears out the robustness of our controller.

    \begin{figure}[htp]
 \centering
\includegraphics[width=1.05\linewidth]{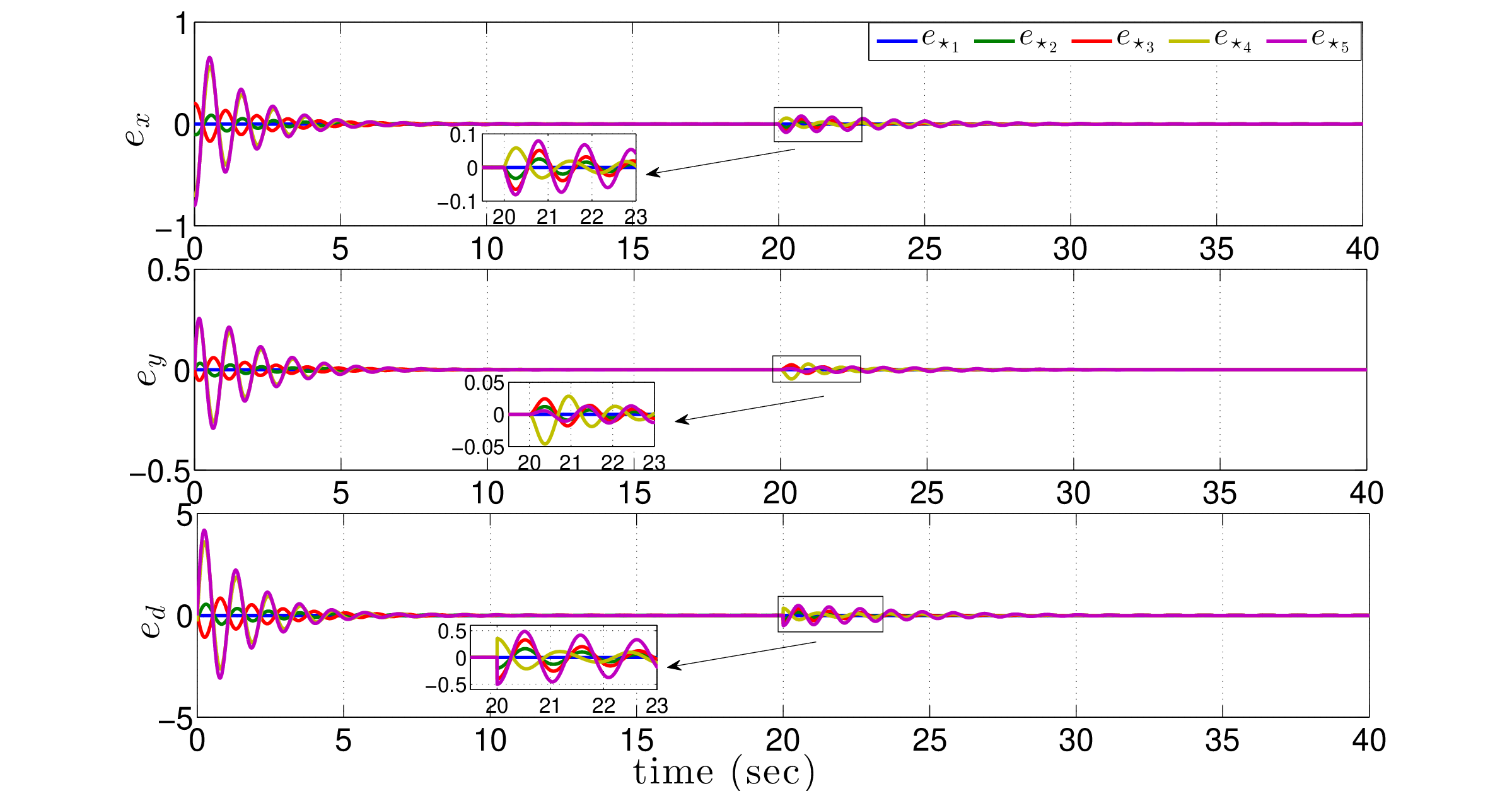}
\caption{Transient behavior of $e_{\star_i}$ with $i:1:5$}
\label{fig7}
\end{figure}

   \begin{figure}[htp]
 \centering
\includegraphics[width=1.05\linewidth]{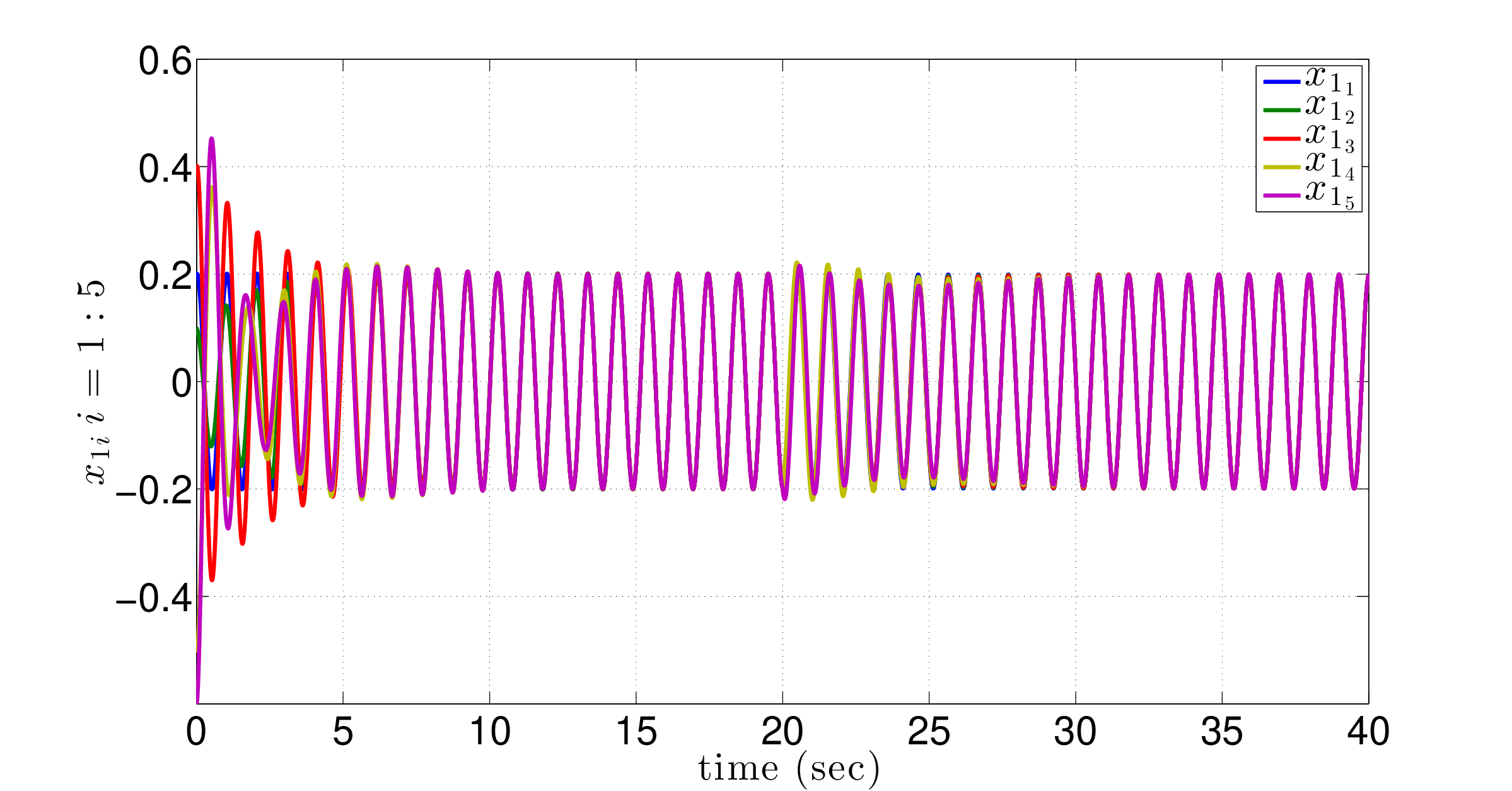}
\caption{Transient behavior of $x_{1i}$ with $i:1:5$}
\label{fig8}
\end{figure}

   \begin{figure}[htp]
 \centering
\includegraphics[width=1.05\linewidth]{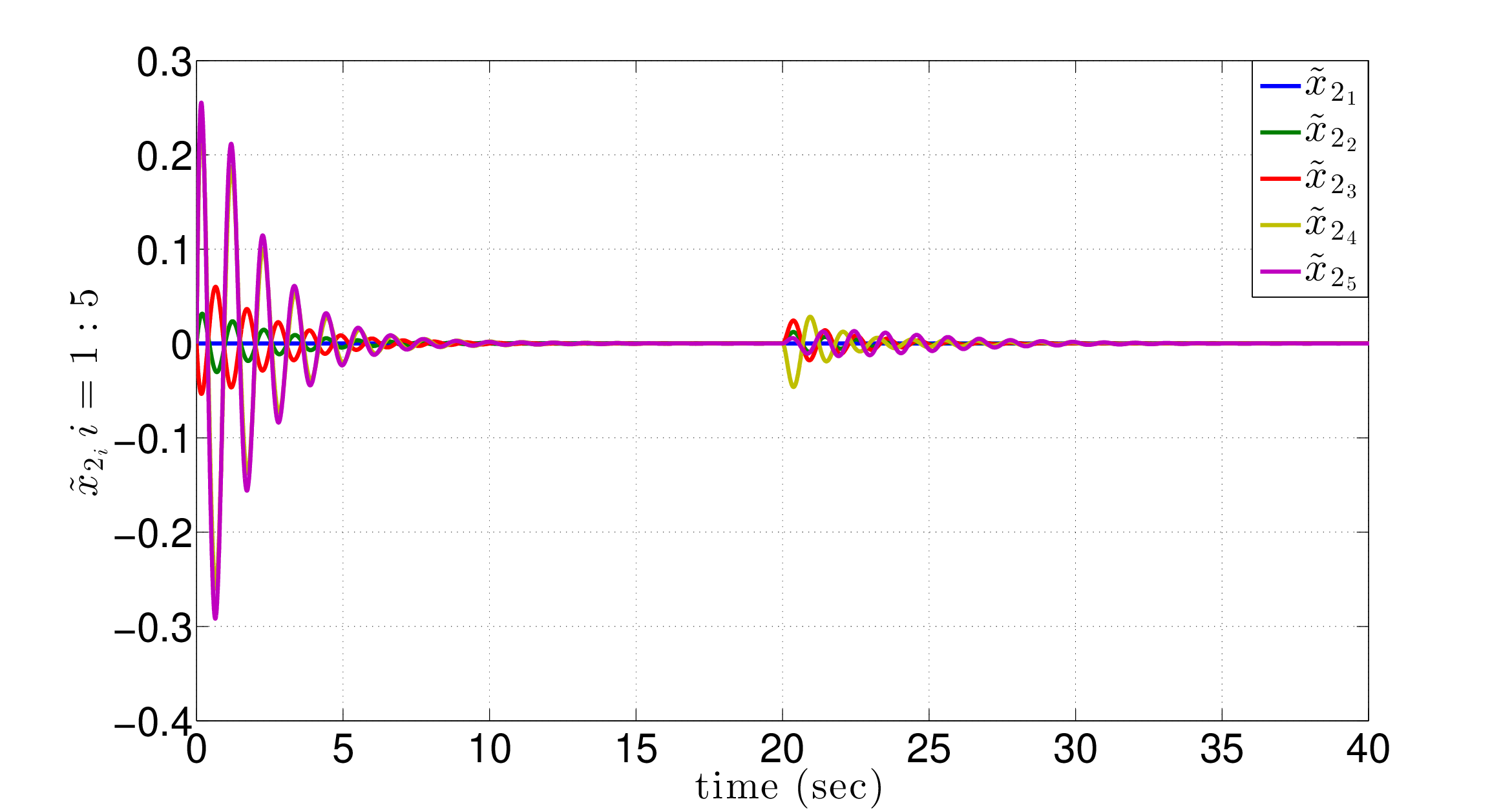}
\caption{Transient behavior of $\tilde x_{2_i}$ with $i:1:5$}
\label{fig9}
\end{figure}

     \begin{figure}[htp]
 \centering
\includegraphics[width=1.05\linewidth]{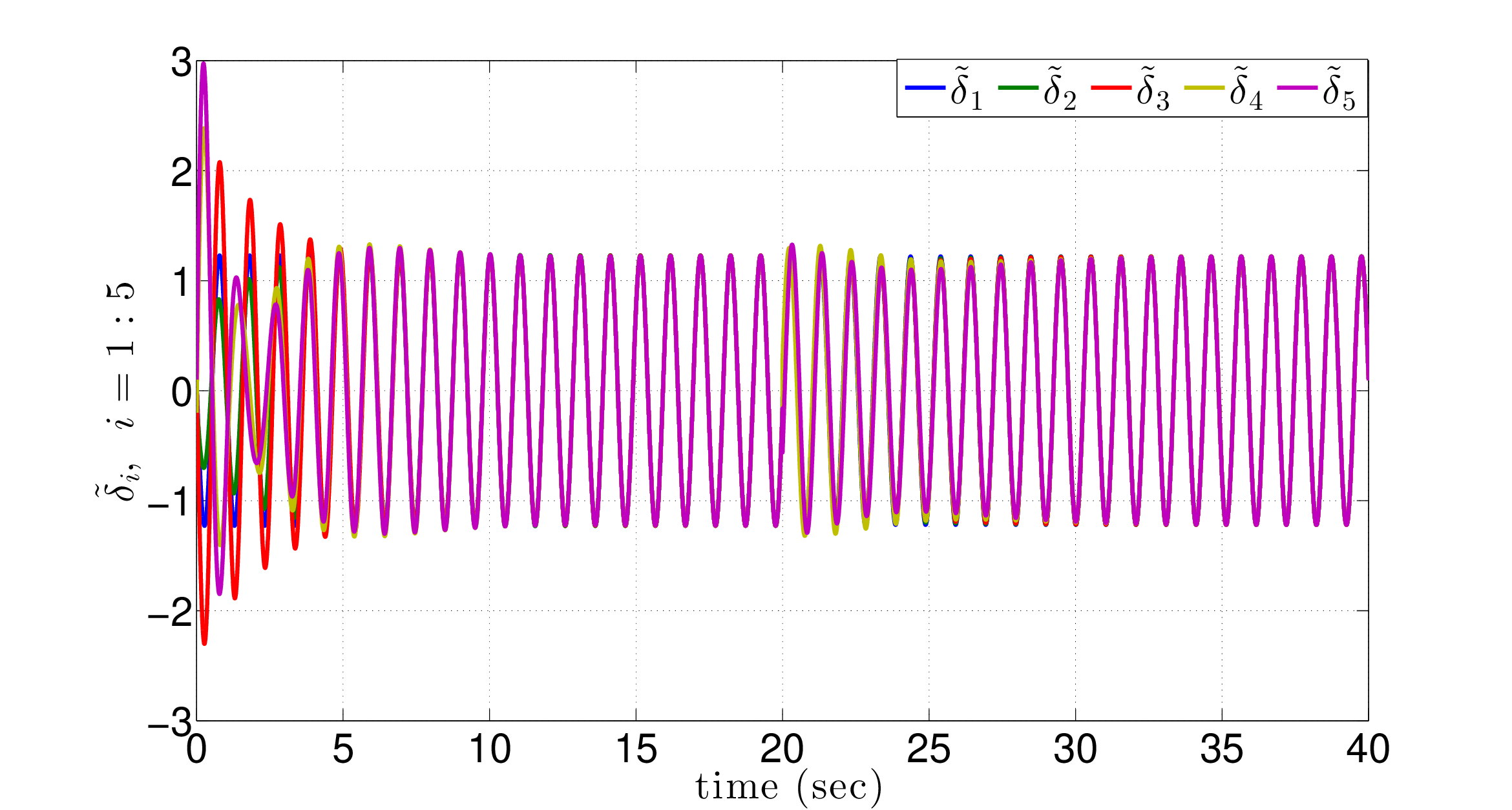}
\caption{Transient behavior of the auxiliary states $\tilde \delta_i$  with $i:1:5$}
\label{fig10}
\end{figure}

     \begin{figure}[htp]
 \centering
\includegraphics[width=1.05\linewidth]{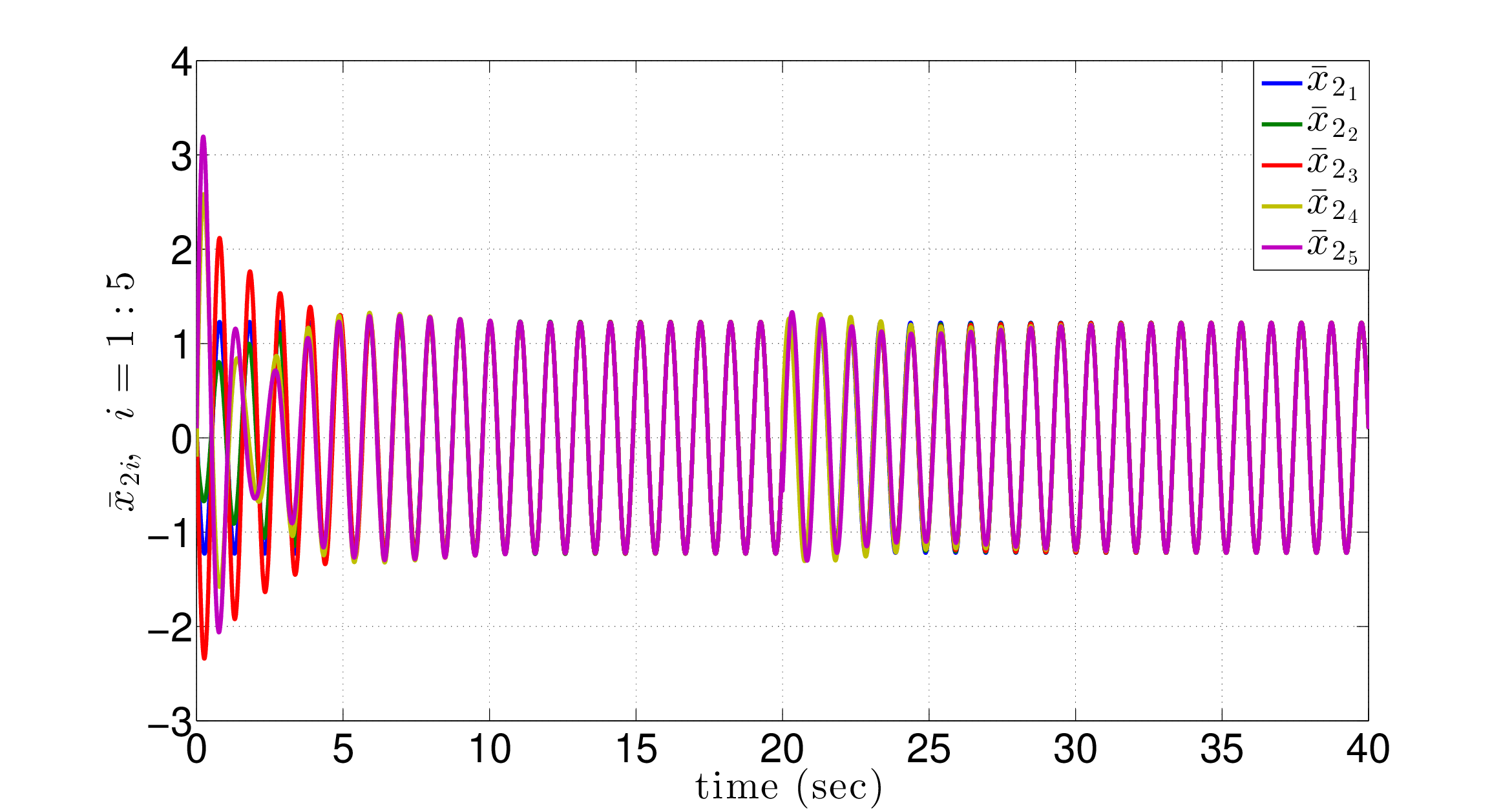}
\caption{Transient behavior of the states  $\bar x_{2_i}$ with $i:1:5$}
\label{fig11}
\end{figure}

\section{Conclusions}
In this work, we have presented two simple continuous controllers to ensure consensus and synchronization of perturbed double integrator systems interconnected under a directed graph containing a spanning tree. 
When matched disturbances are considered, the proposed method resembles a Proportional-Integral- Derivative (PID) controller, whose new integral action enables the rejection of disturbances. On the other hand, for the synchronization problem, an integral action handles the unmatched disturbances without the use of high-gain and discontinuous techniques. The stability of problems is formally proved with strict Lyapunov analysis. 
 
%%%%%%%%%%%%

\begin{IEEEbiography}[{\includegraphics[width=1in, height=1.25in, clip, keepaspectratio]{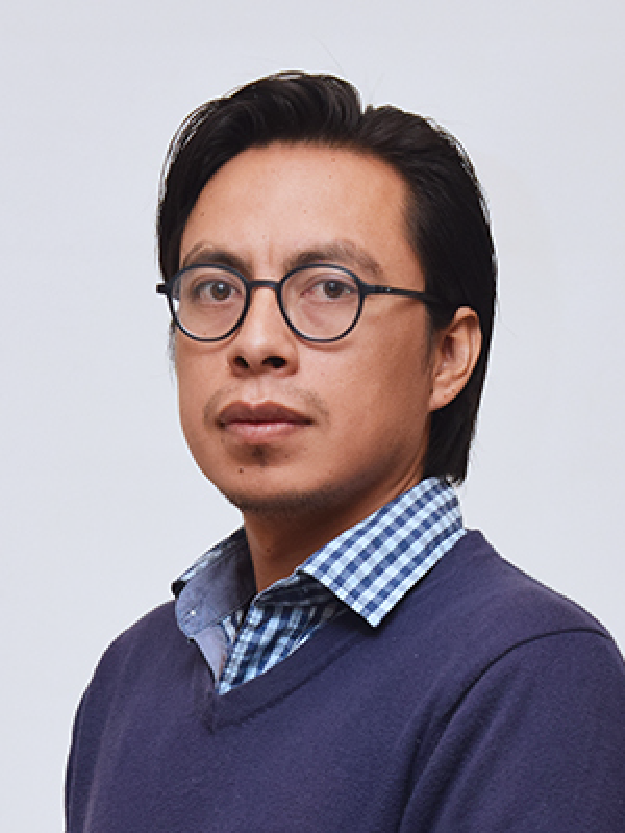}}]
{Jose Guadalupe Romero} (Member, IEEE)  obtained the Ph.D. degree in Control Theory from the University of Paris-Sud XI, France in 2013. 
Currently, he is a full time Professor at ITAM in Mexico and since 2023 he is  the Chair of the Department of Electrical and Electronic Engineering. He has over 45 papers in peer-reviewed international journals where he has also served as a reviewer.
His research interests are focused on nonlinear and adaptive control, stability analysis  and the state estimation problem, with application to mechanical systems, aerial vehicles, mobile robots and multi-agent systems. He currently serves as an Editor of the \textsc{International Journal of Adaptive Control and Signal Processing}.
\end{IEEEbiography}

\begin{IEEEbiography}
[{\includegraphics[width=1in,height=1.25in,clip,keepaspectratio]{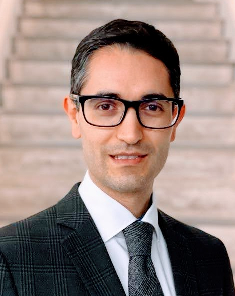}}] {David Navarro-Alarcon} (Senior Member, IEEE) received the Ph.D. degree in mechanical and automation engineering from The Chinese University of Hong Kong, in 2014. 
Since 2017, he has been with The Hong Kong Polytechnic University, where he is currently an Associate Professor with the Department of Mechanical Engineering, and the Principal Investigator of the Robotics and Machine Intelligence Laboratory.
His current research interests include perceptual robotics and control systems.
He currently serves as an Associate Editor of the \textsc{IEEE Transactions on Robotics (T-RO)} and Guest Associate Editor of the \textsc{Journal of Field Robotics}.
\end{IEEEbiography}


\begin{thebibliography}{00}
\bibitem{CHEWAN}
Y. Chen and Z. Wang, Formation control: a  review and a new consideration,  {\it  IEEE/RSJ International Conference on Intelligent Robots and Systems}, pp. 3181-3186, 2005.

\bibitem{OLF}
R. Olfati-Saber, Flocking for multi-agent dynamic systems: algorithms and theory. \TAC, vol. 51, pp. 401-420, 2006.
\bibitem{RENBEARD}
W. Ren and R. W. Beard, Distributed consensus in multi-vehicle cooperative control, London U. K, {\it Springer verlag}, 2008.

\bibitem{MENDIMJOH}
Z. Meng, D. V. Dimarogonas and K. H. Johansson, Attitude coordinated control of multiple underactuated axisymmetric spacecraft, {\it  IEEE Trans. Control Netw. Syst.}, vol. 4, no. 4, pp. 816-825, 2017.

%%%
\bibitem{Lietal11}
Z. Li, X. Liu, P. Lin and W. Ren, Consensus of linear multi-agent systems with reduced-order observer-based protocols, \SCL, vol. 60, no. 7, pp. 510-516, 2011.


\bibitem{REN09}
W. Ren, Distributed leaderless consensus algorithms for networked Euler-Lagrange systems, \IJC, vol. 82, no. 11, pp. 2137-2149, 2009.


\bibitem{LISTMASADA}
K. D. Listmann, M. V. Masalawala and J. Adamy, Consensus for formation control of nonholonomic mobile robots, {\it IEEE International Conference on Robotics and Automation}, pp. 3886-3891, 2009.
 
 
 \bibitem{PHIetal}
C. L. Philip-Chen, G. X. Wen, Y.J. Liu and F. Y. Wang, Adaptive consensus control for a class of nonlinear multiagent time-delay systems using neural networks, {\it Transactions on Neural Networks and Learning Systems}, vol. 25, no. 6, 2014.

\bibitem{NUNOetal}
E. Nu\~no, R. Ortega, L. Basa\~nez and D. Hill, Synchronization of networks of nonidentical Euler-Lagrange systems with uncertain parameters and communication delays, \TAC, vol. 56, no. 4, pp.935-941, 2011. 

\bibitem{SUetal}
H. Su, G. Chen, X. Wang, and Z. Lin,  Adaptive second-order consensus of networked mobile agents with
nonlinear dynamics, \AUT,  vol. 47 no. 2 pp. 368-375, 2011.

\bibitem{ROMNUNALD}
J. G. Romero, E. Nu\~no, C. I. Aldana, Robust PID consensus-based formation control of nonholonomic mobile robots affected by disturbances, \IJC, vol. 96, no. 3, pp. 791-799, 2023. 

\bibitem{BECCROV}
C. P. Bechlioulis and G. A. Rovithakis, Decentalized robust sychronization of unknown high-order nonlinearmulti-agent systems with prescribed transient and steady state performance, \TAC, vol. 62, no.1, pp. 123-134, 2017.

\bibitem{KATROV}
I. Katsoulis, G. A. Rovithakis, Low complexity robust output synchronization protocol with prescribed performance for high-order heterogeneous uncertain MIMO nonlinear multi-agent systems, \TAC, vol. 67, no.6, 2022.

\bibitem{Zhanetal}

H. Zhang, Z. Li, Z. Qu and F. L. Lewis, On constructing Lyapunov functions for multiagent systems, \AUT, vol. 58, pp. 39-42, 2015.

\bibitem{DASLEW}
A. Das and F. L. Lewis, Distributed adaptive control for synchronization of unknown nonlinear networked systems, \AUT, vol. 46, no. 12, pp. 2014-2021, 2010


\bibitem{GHACOR}
B. Gharesifard and J. Cort\'es,  Distributed continuous-time convex optimization on weight-balanced digraphs, \TAC, vol. 59; no. 3, pp. 781-786, 2013


\bibitem{DOMHAD}
A. D. Dom\'inguez-Garc\'ia and C. Hadjicostis, Distributed strategies for average consensus in directed graphs, {\it IEEE Conference on Decision and Control and European Control Conference}, pp. 2124-2129, 2011.

\bibitem{CAOMORAND}
M. Cao, S. Morse and B. Anderson, Reaching consensus in a dynamically changing environment: A graphical approach, {\it SIAM Journal Control Optimization}, vol. 47, no. 2, pp. 575-600, 2009.

\bibitem{SHIJOH}
G. Shi and K. H. Johansson, Robust Consensus for continuous-time multiagent dynamics, {\it SIAM Journal Control Optimization}, vol. 51, no. 5, pp. 3673-3691, 2013.

\bibitem{LIetal19}
C. Li, H. Xin, J. Wang, M. Yu and X. Gao, Dynamic average consensus with topology balancing under a directed graph, {\IJRNLC}, vol. 29, no. 10, pp. 3014-3026, 2019.

\bibitem{ZHANGetal}
J. Zhang, L. Liu, H. Ji and X. Wang, Optimal Output consensus of heterogeneous linear multi agent systems over weight-unbalanced directed networks, {\it Transactions on Cybernetics }, early access, pp. 1-11, 2022. 

\bibitem{RENBEA05}
W. Ren and R. W. Beard, Consensus seeking in multi-agent systems
under dynamically changing interaction topologies, \TAC, vol. 50, no. 5, pp. 655-661, 2005. 

\bibitem{Lietal15}
Z. Li, G. Wen, Z. Duan and W. Ren, Designing fully distributed consensus protocols for linear multi agent systems with directed graph, \TAC, vol. 60, no. 4, pp. 1152-1157, 2015


\bibitem{YANGetal14}
T. Yang, Z. Meng, D. Dimarogonas and K. H. Johansson, Global consensus for discrete-time muti-agent systems with input saturation constraints, \AUT, vol. 50, pp. 499-506, 2014.


\bibitem{MEIRENCHEN}
J. Mei, W. Ren and J. Chen, Consensus of second-order heterogeneous multi-agent systems under a directed graph, {\it American Control Conference}, pp. 802-807, 2014.



\bibitem{LVetal17} 
Y. Lv, Z. Li, Z. Duan and G. Feng, Novel distributed robust adaptive consensus protocols for linear multi-agent systems with directed graphs and external disturbances, \IJC, vol. 90, no.2, pp. 137-147, 2017.


\bibitem{LIUWUZHO}
W. Liu, Q. Wu and S. Zhou, Distributed robust control of uncertain multi-agent systems with directed networks.{\it International Conference on Electrical Engineering and Automatic Control},  Lecture Notes in Electrical Engineering, Springer, Berlin, Heidelberg,  vol.  367,  pp. 45-53, 2016.

\bibitem{WANGetal21}
H. Wang, W. Yu, W. Ren, J. L\'{u}, Distributed adaptive finite-time consensus for second-order multiafnt systems with mismatched disturbances under directed networks, {\it Transactions on Cybernetics}, vol. 51, no. 3, pp. 1347-1358, 2021.

\bibitem{PANLORSUK}
E. Panteley, A. Loria and S. Sukumar, Strict Lyapunov functions for consensus under directed connected graphs, {\it European Control Conference}, pp.935-940, 2020. 


\bibitem{LIDUACHE}
Z. Li, Z. Duan and G. Chen, Dynamic consensus of lienar multi-agent systems, {\it IET Control Theory and Applications}, vol. 5, no. 1, pp. 19-28, 2011.


\bibitem{OLFMUR}
R. Olfati-Saber and R. M . Murray, Consensus in networks of agents with switching topology and time delays, \TAC, vol. 49, no. 9, pp. 1520-1533, 2004. 

\bibitem{FERetal}
D. Ferreira, S. Silva, W. Silva, D. Brandao, G. Bergna and E. Tedeschi, Overview of Consensus Protocol and Its Application to Microgrid Control, {\it Energies}, vol. 15, pp. 1-35, 2022.

\bibitem{RENBEA}
W. Ren and R. W. Beard, Consensus seeking in multiagent systems under dynamically changing interaction topologies, \TAC, vol. 50, pp. 655-661.


\bibitem{ROMDONORT}
J. G. Romero, A. Donaire and R. Ortega, Robust energy shaping control of mechanical systems, {\it Systems \& Control Letters}, vol. 62, pp. 770-780, 2013.


\bibitem{TIALIULIU}
X. Tian, H. Liu and H. Liu, Robust finite-time consensus control for multi-agent systems with disturbances and unknown velocities, {\it ISA Transactions}, vol. 80, pp. 73-80, 2018

\bibitem{ZHAOetal17}
L. Zhao, J. Yu, C. Lin and H. Yu, Distributed adaptive fixed-time consensus tracking for
second-order multi-agent systems using modified terminal sliding mode, {\it Applied Mathematics and Computation}, vol. 312, pp. 23-35, 2017.

\bibitem{SUNetal}
J. Sun, Z. Geng, Y. Lvv, Z. Li and Z. Ding, Distributed Adaptive Consensus Disturbance Rejection for Multi-Agent Systems on Directed Graphs, {\it Transactions on Control of Netwporks Systems}, vol. 5, no. 1, 2018, pp. 629-639. 

\bibitem{Yuetal23}
P. Yu, K.Z. Liu, X. Liu, X. Li, M. Wu and J. She, Robust consensus tracking control of uncertain multi-agent systems with local disturbance rejection. {\it Journal of Automatica Sinica}, vol. 10, no. 2, pp. 427-438, 2023.

\bibitem{EDWSPU}
C. Edwards and S. Spurgeon, {\it Sliding Mode Control: Theory and
Applications}, CRC Press, New York, 1998.

\bibitem{CHEN}
W. H. Chen, Nonlinear disturbance observer-enhanced dynamic inversion control of missiles. {Journal of Guidance Control, and Dynamics}, vol. 26, no.1, pp. 161-166, 2003.

\bibitem{SUNetal}
J. Sun, J. Yang, S. Li and W. X. Zheng, Sampled-data-based-event-triggered active disturbance rejection control for disturbed systems in networked environment, {\it Transactions on Cybernetics}, vol. 49, no. 2, pp. 556-566, 2019. 

\bibitem{BATFOSOLI}
C. Battle, E. Fossas and G. Olivar, Stabilization of periodic orbits in variable structure systems: Application to DC-DC power converters. {\it International Journal of Bifurcation and Chaos}, vol. 16, no. 12B, pp. 2635-2643, 1996.  

\bibitem{GUetal21} 
L. Gu, Z. Zhao, J. Sun and Z. Wang, Finite-time- leader follower consensus control of multiagent systems with mismatched disturbances. {\it Asian Journal of Control}, vol. 24, no. 2, pp/ 722-731, 2022.

\bibitem{ZHENGetal23}
C. B. Zheng, Z. H. Pang, J. X. Wang, S. Gao, J. Sun a,d G. P . Liu, Time-varying formation prescribed performance control with collision avoidance for multi-agent systems subject to mismatched disturbances. {\it Information Sciences}, vol. 633, pp. 517-530, 2023.

\bibitem{XIAOetal22}
W. Xiao, H. Ren, Q. Zhou, H. Li and R. Lu, Distributed finite-time containment control for nonlinear multi agent systems with mismatched disturbances. {\it Transactions on cybernetics}, vol. 52, no. 7, pp. 6939-6948, 2022.

\bibitem{PANLOR17}
E. Panteley and A. Loria, Synchronization and dynamic consensus of heterogeneous networked systems, \TAC, vol. 62, no. 8, pp.3758-3773, 2017.

\bibitem{DUTetal22}
M. Dutta, E. Panteley, A. Loria and S. Sukumar, Strict Lyapunov functions for dynamic consensus in linear systems interconnected over directed graphs. {\it IEEE Control Systems Letters}, vol. 6, pp. 2323-2328, 2022. 

\bibitem{DUTetal23}
M. Dutta, E. Panteley, S. Sukumar and A. Loria, Dynamic consensus and adaptive bias
compensation for multi agent linear systems over directed networks. {\it https://hal.science/hal-03869863}, 2023.

\bibitem{DUTetal23a}
M. Dutta, E. Panteley, S. Sukumar and A. Loria, MRAC-based dynamic consensus of linear systems with biased measurements, over directed networks. {\it https://hal.science/hal-03869892}, 2023.

\bibitem{SCASEP}
L. Scardovi and R. Sepulchre, Synchronization in networks of identical linear systems. {IEEE Conference on Decision and Control}, pp. 546-551, 2008.

\bibitem{ORTetal20}
R. Ortega, B. Yi, J. G. Romero and A. Astolfi, Orbital stabilization of nonlinear systems via the
immersion and invariance technique. \IJRNLC, vol. 30, no. 5, pp. 1850-1871, 2020.

\bibitem{SAIISHISH}  
T. Saito, Y. Ishikawa and Y. Ishige, Multi-phase synchronization and parallel power-converters: In: V. Longhini, P. Palacios, A. (eds), {\it Applications of Nonlinear Dynamics}, Springer-Verlag Berlin, pp. 133-144, 2009. 

\bibitem{ANGOLITAB}
F. Angulo, G. Olivar, A. Taborda, Continuation of periodic orbits in a ZAD-strategy controlled buck converter. {\it Chaos Solutions and Fractals}, vol. 38, pp. 348-363, 2008. 
\end{thebibliography}
\end{document}